\numberwithin{equation}{section}  
\theoremstyle{plain}
\newtheorem{theorem}{Theorem}
\newtheorem{claim}[theorem]{Claim}
\newtheorem{corollary}[theorem]{Corollary}
\newtheorem*{corollary*}{Corollary}
\newtheorem{lemma}[theorem]{Lemma}
\theoremstyle{definition}
\newtheorem{definition}[theorem]{Definition}
\newtheorem*{definition*}{Definition}
\newtheorem{assumption}[theorem]{Assumption}
\newtheorem*{hypothesis*}{Hypothesis}
\newtheorem{observation*}[theorem]{Observation}
\theoremstyle{remark}
\newtheorem{remark}[theorem]{Remark}
\newtheorem*{remark*}{Remark}
\newtheorem*{notation*}{Notational remark}
\newif\ifopenquestions
\newif\ifmultischeme
\definecolor[named]{mygreen}{cmyk}{0.97,0,0.53,0.29}
\newcommand{\debug}[1]{#1}
\newcommand{\braced}[1]{\ifthenelse{\equal{#1}{}}{}{\left(#1 \right)}}
\newcommand{\sbraced}[1]{\ifthenelse{\equal{#1}{}}{}{(#1)}}
\NewDocumentCommand{\E}{O{p}  }{\mathbb E\left[ #1 \right]}
\NewDocumentCommand{\N}{ }{\mathbb  N}
\NewDocumentCommand{\proba}{O{p} O{} }{\mathbb P \left[ #1 \ifthenelse{\equal{#2}{}}{}{\mid #2} \right]}
\NewDocumentCommand{\actions}{ }{\debug{A}}
\newglossaryentry{actions}{
    name={$\actions$},
    description={Set of actions of every player, that is $\{0, \ldots, \period-1\}$},
    sort={a}
}
\NewDocumentCommand{\pl}{ }{\debug{i}}
\NewDocumentCommand{\plb}{ }{\debug{j}}
\newglossaryentry{player}{
    name={$\pl$, $\plb$},
    description={Typical players in $\players$},
    sort={i}
}
\NewDocumentCommand{\nplayers}{ }{\debug{I}}
\NewDocumentCommand{\players}{ }{\debug{I}}
\newglossaryentry{players}{
    name={$\nplayers$},
    description={Set of players and its cardinality, when there is no ambiguity},
    sort={I}
}
\NewDocumentCommand{\socialcost}{O{\tmixedprof}  }{\debug{\mathsf{SC}(#1)}}
\newglossaryentry{socialcost}{
name={$\socialcost$},
description={Social cost of action profile $\mixedprof$, that is $\sum_{\pl \in \players} \cost$},
sort={SC}
}
\NewDocumentCommand{\rcost}{O{\ti}}{\debug{l_{#1}}}
\NewDocumentCommand{\strat}{O{\pl}}{\ifthenelse{\equal{#1}{}}{\debug{\boldsymbol x}}{\debug{x}^{#1}}}
\NewDocumentCommand{\tstrat}{O{\pl} O{\ti}}{\debug{x}^{#1}_{#2}}
\newglossaryentry{strat}{
    name={$\strat$},
    description={Strategy of player $\pl$ in a repeated context, \emph{\ie} a function $\histset \rightarrow \simplex(\actions)$},
    sort={xs}
    }
\newglossaryentry{tstrat}{
    name={$\tstrat$},description={Strategy of player $\pl$ at time $\ti$, \emph{\ie} $\strat(\hist)$},
    sort={xt}
}
\NewDocumentCommand{\corr}{O{}}{\sigma^{#1}}
\NewDocumentCommand{\s}{O{}  }{\sigma ^{#1}}
\NewDocumentCommand{\purep}{O{}}{\debug{\ifthenelse{\equal{#1}{\pl}}{a^{#1}}{
\ifthenelse{\equal{#1}{\plb}}{a^{#1}}{\boldsymbol a^{#1}}}}}
\NewDocumentCommand{\pureb}{O{}}{\debug{\ifthenelse{\equal{#1}{\pl}}{b^{#1}}{
\ifthenelse{\equal{#1}{\plb}}{b^{#1}}{\boldsymbol b^{#1}}}}}
\NewDocumentCommand{\act}{O{\pl}}{\purep[#1]}
\NewDocumentCommand{\actb}{O{\pl}}{\pureb[#1]}
\newglossaryentry{purep}{
    name={$\purep$, $\pureb$},
    description={Typical pure action profiles, with $\purep[\pl]$, $\pureb[\pl]$ the actions played by $\pl$},
    sort={ab}
}
\NewDocumentCommand{\mixedp}{O{}}{
    \debug{
        \ifthenelse{\equal{#1}{\pl}}
            {\sigma^{#1}}
            {\ifthenelse{\equal{#1}{\plb}}
                {\sigma^{#1}}
                {\boldsymbol \sigma^{#1}}}
    }
}
\newglossaryentry{mixedp}{
    name={$\mixedp$},
    description={Typical mixed action profiles, with $\mixedp[\pl]$ the mixed action played by $\pl$ and $\mixedp[\pl](\act)$ the probability for $\pl$ to play $\act$},
    sort={s}
}
\NewDocumentCommand{\tmixed}{O{\ti} O{\pl}}{
    \debug{
        \ifthenelse{\equal{#2}{\pl}}
            {\sigma_{#1}^{#2}}
            {\ifthenelse{\equal{#2}{\plb}}
                {\sigma_{#1}^{#2}}
            {\boldsymbol \sigma^{#2}_{#1}}    }
    }
}
\newglossaryentry{tmixed}{
    name={$\tmixed$},
    description={Typical mixed action profiles, with $\tmixed[\pl]$ the mixed action played by $\pl$ and $\tmixed[\pl](\act)$ the probability for $\pl$ to play $\act$},
    sort={s}
}
\NewDocumentCommand{\correq}{}{\debug{\boldsymbol{\tau}}}
\NewDocumentCommand{\tcorreq}{O{\ti}}{\correq_{#1}}
\NewDocumentCommand{\tact}{O{\ti} O{\pl}  }{\debug{\ifthenelse{\equal{#2}{\pl}}{a^{#2}_{#1}}{
\ifthenelse{\equal{#2}{\plb}}{a^{#2}_{#1}}{\boldsymbol a^{#2}_{#1}}}}}
\newglossaryentry{timedaction}{
    name={$\tact$},
    description={In a repeated setting, action played by $\pl$ at time $\ti$},
    sort={at}
}
\NewDocumentCommand{\vent}{}{\boldsymbol{k}}
\NewDocumentCommand{\tvent}{}{\boldsymbol{k}_{\ti}}
\NewDocumentCommand{\tcost}{O{\tmixedprof} O{\pl} O{\tvent} }{\debug{c^{#2}_{#3}\sbraced{#1}}}
\NewDocumentCommand{\ticost}{O{\mixedp} O{\pl} O{\tvent} }{\debug{c^{#2}_{#3}\sbraced{#1}}}
\NewDocumentCommand{\cost}{O{\mixedp} O{\pl} O{} }{\debug{c^{#2}_{#3}(#1)}}
\newglossaryentry{cost}{name={$\cost[\purep] \text{\normalfont\ or } \tcost[\purep]$}, description={Cost of player $\pl$ if players follow profile $\purep$ with $\vent$ left implicit when it is not ambiguous},
sort={cost}
}
\NewDocumentCommand{\states}{}{\debug{S}}
\newglossaryentry{states}{
    name={$\states$},
    description={State of the system, so the number of jobs that every player holds},
    sort={S}}
\NewDocumentCommand{\st}{O{\ti}}{\debug{\boldsymbol{k}_{#1}}}
\newglossaryentry{state}{
    name={$\st$},
    description={State of the system at period $\ti$},
    sort={kt}}
\NewDocumentCommand{\hist}{O{\ti}}{\debug{h_{#1}}}
\NewDocumentCommand{\histset}{}{\mathcal{\debug{H}}}
\newglossaryentry{history}{name={$\hist$},description={History: state-action profile pairs of previous periods}, sort={h}}
\NewDocumentCommand{\ti}{ }{\debug{t}}
\NewDocumentCommand{\numsim}{O{\act[\plb]} }{\debug{N}_{#1}}
\NewDocumentCommand{\numpre}{O{\act[\plb]} }{\debug{M}_{#1}}
\NewDocumentCommand{\tnent}{O{\ti}}{\debug{k_{#1}}}
\NewDocumentCommand{\tent}{O{\ti} O{\pl}}{\debug{k_{#1}^{#2}}}
\newglossaryentry{tjobs}{
    name={$\tnent, \tent$},
    description={Number of jobs of all players, of player $\pl$. See $\nent$ and $\ent$},
    sort={k}
}
\NewDocumentCommand{\latecount}{O{\pl}}{\debug{\tilde k}^{#1}}
\newglossaryentry{latecount}{
    name={$\latecount$},
    description={Number of jobs of player $\pl$ which arrive late},
    sort={ktilde}}
\NewDocumentCommand{\late}{O{\pl} m}{\debug{p^{#1}(#2)}}
\newglossaryentry{late}{
    name={$\late{\purep}$},
    description={Probability for a job of $\pl$ to be late assuming that the players action profile is $\purep$},
    sort={p}
}
\NewDocumentCommand{\ent}{O{\pl}  }{\debug{k^{#1}}}
\newglossaryentry{jobs}{
    name={$\ent$},
    description={Number of jobs of players $\pl$},
    sort={kp}
}
\NewDocumentCommand{\nent}{}{\debug{k}}
\newglossaryentry{totaljobs}{
    name={$\nent$},
    description={Total number of jobs},
    sort={ktotal}
}
\NewDocumentCommand{\plty}{ }{\debug{C}}
\NewDocumentCommand{\tplty}{ O{\tnent}}{\debug{C_{#1}}}
\newglossaryentry{penalty}{name={$\plty$, $\tplty$},description={Penalty incurred by a job that arrives after the end of a period},
sort={C}}
\NewDocumentCommand{\period}{ }{\debug{L}}
\newglossaryentry{period}{name={$\period$},description={Length of each period},
sort={L}}
\NewDocumentCommand{\simplex}{}{\bigtriangleup}
\NewDocumentCommand{\trun}{}{\debug{u}}
\NewDocumentCommand{\jobsrun}{}{\debug{n}}
\NewDocumentCommand{\regret}{}{\debug{R}}
\NewDocumentCommand{\tststrat}{O{\pl} O{\ti} O{\countvar}  }{\debug{x}^{#1}_{\ifthenelse{\equal{#2}{}}{#3}{#2, #3}}}
\NewDocumentCommand{\weight}{O{\ti} O{\pureb[\pl]} O{\pl}}  {\debug{w_{#1}^{#3}(#2)}}
\newglossaryentry{weight}{name={$\weight$}, description={Weight of action $\pureb[\pl]$ of player $\pl$ in the \ac{EWA} at time $\ti$}, sort={w}}
\NewDocumentCommand{\ww}{O{t} O{\pureb[\pl]} O{n}  }{\debug{w^{\pl}_{\ifthenelse{\equal{#1}{}}{#3}{#1, #3}}\sbraced{#2}}}
\newglossaryentry{mlweight}{
    name={$\ww$},
    description={Weight of action $\pureb[\pl]$ of player $\pl$ in level $n$ in the \ac{MLEWA} at time $\ti$},
    sort={ww}}
\NewDocumentCommand{\Ww}{O{\firsttime} O{\pureb[\pl]} O{\jobsrun}  }{\debug{W^{\pl}_{#1, #3}(#2)}}
\newglossaryentry{initweight}{
    name={$\Ww$},
    description={Initial weight of action $\pureb[\pl]$ of player $\pl$ in level $n$ in the \ac{MLEWA}},
    sort={W}
}
\NewDocumentCommand{\firsttime}{O{\pl} O{\jobsrun}}{\debug{\tau_{#2}^{#1}}}
\newglossaryentry{firsttime}{
    name={$\firsttime$}, description={First time that player $\pl$ has $n$ jobs to dispatch},
    sort={tau}}
\NewDocumentCommand{\smness}{}{\debug{\eta}}
\newglossaryentry{smoothness}{name={$\smness$}, description={Smoothness of weights updating in \ac{EWA}}, sort={eta}}
\NewDocumentCommand{\countvar}{}{\debug{n}}
\NewDocumentCommand{\countn}{O{\countvar} O{\ti} O{} }{\debug{\lambda}^{#3}_{#2, #1}}
\NewDocumentCommand{\countindn}{O{\countvar} O{\ti} O{\pl} }{\debug{\mu}^{#3}_{#2, #1}}
\newglossaryentry{countn}{name={$\countn$},description={Number of times between 1 and $\ti$ that $\tent$ is equal to $\countvar$},
sort={lambda}}
\NewDocumentCommand{\sm}{ }{\smness}
\NewDocumentCommand{\vx}{O{\ti}  }{\debug{x}^{\pl}_{ #1, \countvar}}
\NewDocumentCommand{\costz}{ }{\tcost[0,\pureprof[-\pl]]}
\NewDocumentCommand{\costb}{ }{\tcost[\pureprof]}
\NewDocumentCommand{\myb}{ }{\purep[\pl]}
\NewDocumentCommand{\ie}{}{i.e., }
\NewDocumentCommand{\pureprof}{O{}}{\purep[#1]} 
\NewDocumentCommand{\mixedprof}{O{}}{\mixedp[#1]} 
\NewDocumentCommand{\tmixedprof}{O{\ti}}{{\mixedp}_{#1}}
\NewDocumentCommand{\indic}{O{}}{\mathbf{1}_{\{#1\}}}
\NewDocumentCommand{\kz}{ }{\debug{k_{0}^{-\pl}}}
\NewDocumentCommand{\tkz}{ }{\debug{k_{\ti}^{-\pl}(0)}}
\NewDocumentCommand{\maxmove}{}{M}
\NewDocumentCommand{\nti}{}{\ti+1}
\NewDocumentCommand{\tib}{}{t'}
\NewDocumentCommand{\walk}{O{\ti}}{X_{#1}}
\NewDocumentCommand{\sigwalk}{O{\ti}}{\mathcal F_{#1}}
\NewDocumentCommand{\occl}{}{m}
\NewDocumentCommand{\occ}{O{\ti} O{\lev}}{o_{#1, #2}}
\NewDocumentCommand{\col}{}{\pl}
\NewDocumentCommand{\countr}{O{\ti} O{\col} O{x}}{\debug{\mu_{#1, #2, #3}}}
\NewDocumentCommand{\princr}{O{\lev} O{\occl}}{r(#1, #2)}
\NewDocumentCommand{\maxpr}{}{\rho}
\NewDocumentCommand{\sumpr}{}{A}
\NewDocumentCommand{\lev}{}{x}
\NewDocumentCommand{\levz}{}{z_0}
\DeclarePairedDelimiter{\braces}{\{}{\}}
\DeclarePairedDelimiter{\parens}{(}{)}
\DeclarePairedDelimiter{\abs}{\lvert}{\rvert}
\DeclarePairedDelimiterX{\braket}[2]{\langle}{\rangle}{#1,#2}
\DeclarePairedDelimiterX{\inner}[2]{\langle}{\rangle}{#1,#2}
\DeclarePairedDelimiterX{\setdef}[2]{\{}{\}}{#1:#2}
\DeclarePairedDelimiterXPP{\probof}[1]{\Prob}{(}{)}{}{%

#1}
\DeclarePairedDelimiterXPP{\exof}[1]{\Expect}{[}{]}{}{%

#1}
\newcommand{\initializationnormalized}{\tststrat[\pl][\firsttime][\countvar]}
\begin{document}
\title{Strategic Behavior and No-Regret Learning in Queueing Systems}

\author{Lucas Baudin%
\thanks{LAMSADE, Université Paris-Dauphine, Paris, France. \textsf{lucas.baudin@dauphine.eu}}
\and
Marco Scarsini%
\thanks{Dipartimento di Economia e Finanza, Luiss University, Viale Romania 32, 00197 Rome, Italy 
\textsf{mscarsini@luiss.it}, https://orcid.org/0000-0001-6473-794X}
\and
Xavier Venel%
\thanks{Dipartimento di Economia e Finanza, Luiss University, Viale Romania 32, 00197 Rome, Italy,  
\textsf{xvenel@luiss.it}, https://orcid.org/0000-0003-1150-9139} 
}

\maketitle

\begin{abstract}

This paper studies a dynamic discrete-time queuing model where at every period players get a new job and must send all their jobs to a queue that has a limited capacity. Players have an incentive to send their jobs as late as possible; however if a job does not exit the queue by a fixed deadline, the owner of the job incurs a penalty and this job is sent back to the player and joins the queue at the next period. Therefore, stability, i.e. the boundedness of the number of jobs in the system, is not guaranteed. We show that if players are myopically strategic, then the system is stable when the penalty is high enough. Moreover, if players use a learning algorithm derived from a typical no-regret algorithm (exponential weight), then the system is stable when penalties are greater than a bound that depends on the total number of jobs in the system.

\end{abstract}

\section{Introduction}

In the classical treatment of queues agents arrive at random times, wait according to a specified regime, and then get served; the service time is also random. 
In these models there is no room for any strategic behavior of the agents. Even when agents balk or renege, this is modeled as a random event, not a strategic choice of the agents.
Starting with the seminal paper by \citet{Naor:1969}, strategic elements have been included in queueing models. 
For instance, in Naor's model, when agents arrive, they rationally choose whether to join the queue or to balk.

The suitable tools to analyze strategic queueing models come from game theory.
The literature has considered several strategic models of queueing systems under different stochastic assumptions, different service regimes, and different strategy sets for the players. 
Various goals have been considered, such as computing Nash equilibria of the games, studying their efficiency, and examining the system stability under various equilibria.

One interesting class of problems, first examined by \citet{GlaHas:EJOR1983}, deals with situations where players need to be serviced before a fixed deadline, and otherwise pay a steep penalty.
In several variations of this model, players play the game repeatedly, for instance, they commute daily to the office and need to get there on time.

In a recent development, \citet{GaiTar:EC2020,GaiTar:EC2021}
considered a discrete-time queueing model where agents use learning algorithms to make the decision of which server to choose. 
The novelty of their analysis was the consideration of spillovers from one period to the other.
One of their goals is to establish conditions for the system to be stable.

\subsection{Our contribution}
\label{suse:contribution}

Our paper draws both on the literature on queues with a fixed deadline and on the contribution on learning and studies a discrete-time model where agents at every period receive a new job that requires service from a single server and need to decide when their jobs join a queue, taking into account a trade-off between waiting costs and a stiff penalty for being late. 
The model includes spillovers, since the jobs that cannot be served by the deadline go back to their owners and are to be sent to the server the following period; these late jobs are then added to the incoming daily new job.
From a queuing point of view, the model is deterministic: each player gets exactly one new job at every period. 
Randomness is due to the actions of the players, which can be mixed, and to the regime of the queue: if several jobs join the queue at the same time, the order in which they get served is uniformly random.

We consider several aspects of the model under the assumption that agents are strategic, but myopic, \ie at every period they play an equilibrium, but do not take into account the future effect of their actions.
This leads us to examine the equilibria of the single-period game. 
In this framework we show that the structure of equilibria depends on whether the number of jobs in the system is or is not larger than the number of times in each period; 
moreover it depends on whether the penalty cost for being late is or is not large enough.

When the number of jobs in the system exceeds the number of times in each period, and the penalty cost is large enough, we show that the stage game has a single \acl{CCE} (hence, a single \acl{NE}), where all players sends all their jobs to the queue as early as possible.
When the number of jobs in the system does not exceed the number of times in each period, then the stage game has multiple equilibria, whose structure we study.
Surprisingly, even if each player has a minmax strategy that guarantees that this player's jobs will meet the deadline, nevertheless, in equilibrium some jobs will be late with positive probability. 
This implies that the number of jobs in the system in the following period will be larger than in the current period.

In the second part of the paper we study a model where players use a no-regret learning algorithm to make their choices.
As we mentioned before, the number of jobs that each player may vary from one period to another. 
To face this, we will adopt a variation of the \acl{EWA} that takes into account the changing environment.

We describe the model using the language of game theory, but other motivations are possible.
For instance, we could consider a revenue-management interpretation where at each period agents buy a priority for their jobs. 
There are different priority levels and their price is monotone with the priority.
Jobs with high priority are served before jobs with lower priority, up to the fixed capacity for each period.
When there is a priority conflict, it is resolved at random.
The constraint is that a maximum of one agent with the lowest priority is served, a maximum of two agents with the lowest two priorities are served, etc.

\subsection{Related literature}

The analysis of strategic behavior in queueing systems goes back to the seminal paper of \cite{Naor:1969}, who studied an M/M/1 queue with a \ac{FIFO} policy where the agents' payoff consists of the reward that they get when they get served, minus a waiting cost that is proportional to the time they spend in the queue.
Once they arrive, they can decide whether to join the queue or to balk.
If they play a Nash equilibrium, their behavior is socially inefficient, in the sense that it does not maximize the social payoff (the sum of all players' payoffs). 
This is due to the fact that one agent's selfish behavior does not take into account the externalities it creates on other agents.
\citet{Has:E1985} showed that optimality can be achieved by a \ac{LIFO} policy.
The literature on strategic queueing system has then exploded. 
The reader is referred to \citet{HasHav:Kluwer2003} and \citet{Has:CRC2016} for a general treatment of the topic.

Some models have considered strategic agents who can decide when to join a queue.
The seminal paper by \citet{GlaHas:EJOR1983} considered a model, called ?/M/1, where agents arrive at a facility that every day starts service at time \(0\), serves all customers that arrive by some time \(T\) according to a \ac{FIFO} policy.
Each day, agents decide whether to visit the facility or not. 
If they do, they pick their arrival time with the goal of minimizing their expected time in the queue. 
This queueing literature with a strategic choice of the arrival time has been recently surveyed by
\citet{Haviv_Ravner:2021}.

In the framework of transportation theory, \citet{Vic:AER1969} studied a bottleneck model where agents choose the time they leave home to reach the office and used some fluid approximation.
\citet{RivScaToM:SSRN2018} studied a discrete version of the Vickrey model and examined Nash and correlated equilibria and their efficiency. Our model is a (variation of) a repeated version of their model with spillover. \citet{KawKonYuk:IJGT2023} extended the study of this bottleneck model to more general preferences and provided  conditions for the existence of a pure Nash equilibrium; moreover they examined the link with strong equilibria.

Recently, \citet{GaiTar:EC2020} proposed a model where several queues receive packets with a fixed, time-independent (but queue dependent) probability and must send them to a number of servers. 
Each server may process a packet it received with a fixed, time-independent probability. 
The paper studied the behavior of the system when players use no-regret learning procedures and in particular determined the conditions on the model parameters for the system to be stable. 
In a subsequent paper, \citet{GaiTar:EC2021} compared the behavior of no-regret, short-term, learners with players who adopt a long-run optimizing behavior. 
\citet{sentenac2021decentralized} used a similar model and introduced a new cooperative and decentralized algorithm that players can use. 
Compared to this work, our paper does assume that players intend to cooperate, in the sense that they use a standard algorithm to maximize their own payoff.

No-regret learning procedures algorithms are a family of online reinforcement learning algorithms such that they are at least as good as any constant action selected in hindsight (\emph{i.e.,} given that other players or the environment actions are fixed). The notion, also called Hannan or universal consistency \cite{fudenbergTheoryLearningGames1998}, was originally introduced by \citet{hannanApproximationBayesRisk1957} and is satisfied by multiple algorithms~\citep[see, for a recent review,][]{perchetApproachabilityRegretCalibration2014}. In particular, the \acf{EWA}~\cite{littlestoneWeightedMajorityAlgorithm1994, cesa-bianchiPredictionLearningGames2006} is a simple but efficient reinforcement learning procedure which adjusts weights of actions based on past experiences. However, there is no widely accepted extension of the concept of no-regret learning to systems with (endogenously changing) state variables (for instance Markov Decision Processes) such as ours (the state of the queue), so we propose a simple, multi-level, extension to \ac{EWA}.
\citet{BesGurZee:OR2015,BesGurZee:SS2019} dealt with stochastic optimization and single-agent, multi-armed bandit problems with temporal uncertainty in the rewards. 
In a recent interesting paper, \citet{CriGurLig:arXiv2022} studied---in the context of (non-stochastic) repeated games---strategies that have no-regret compared to dynamic sequences of actions whose number of changes scales sublinearly in time. There is also a large literature on time-varying games were studied by \citet{CarAbeWanXu:ICML2019,MerSta:IEEECDC2021,DuvMerStaVer:MOR2022,ZhaZhaLuoZho:ICML2022,AnaPanFarSan:arXiv2023}, among others. In a game theoretic framework, the reward functions of a player can change across time for two reasons: a change in the game and a change in the strategy of the other players. In these articles, the authors investigate different solution concepts to take this distinction into account. A key difference with our model is that the change of the game is exogeneous whereas in our model it is endogeneous. We restrict ourselves to the more classical notion of regret.

\subsection{Outline}
\label{suse:outline}
\cref{sec:model} introduces the model and the key concept that the paper will consider. 
\cref{sec:myopic} analyzes the behavior of myopic strategic players. 
The key point in this section is the analysis of the one-shot game for any job allocation to the players. 
\cref{sec:learning}
analyzes the repeated game  with learning players.

\section{The model}
\label{sec:model}

We consider a discrete-time game where \(\players\) is the set of players and time is split into periods of equal length \(\period\ge 2\). 
The generic period is denoted by \(\ti\).
Throughout the paper we assume that there are at least \(\abs{\players} \ge 3\) players.

At the beginning of every period \(\ti\ge 0\) each player \(\pl\) receives one job and independently chooses an action \(\tact \in \braces*{0,\dots,\period-1}\) \glsadd{timedaction}, which represents the time during period \(\ti\) at which \(\pl\)'s jobs join the queue.
The symbol \(\pureprof_{\ti} \coloneqq \parens{\tact}_{\pl\in\nplayers}\)\glsadd{purep} denotes the action profile at period \(\ti\).
Since all of player~$\pl$'s jobs  join the queue at the same time, the action set $\actions$ does not depend on the number of jobs held by each player.

The choice of \(\tact\) may depend on the past history of the game.
Depending on the actions taken by all players, some jobs are late, \ie they cannot exit the queue before the end of the period.
Late jobs are returned to their respective owners and will join the queue once more at the following period.
This produces a \emph{spillover} effect: at every period, the number of jobs that player \(\pl\) handles is the number of \(\pl\)'s late jobs in the previous period plus one.

The state of the system at period \(\ti\) is the vector \(\st=(\tent)_{\pl\in \players}\), where  \(\tent\) is the number of player \(\pl\)'s jobs at the beginning of period \(\ti\).
The state space is \(\states \coloneqq \N^\players\)\glsadd{states}.
The total number of jobs at time \(\ti\) is 
\begin{equation}
    \label{eq:tot-jobs-t}
    \tnent\coloneqq \sum_{i \in \nplayers} \tent.
\end{equation}

Late jobs at period \(\ti\) incur a penalty cost \(\tplty\) that depends on the total number \(\tnent\) of jobs in the system at period \(\ti\).
At the end of period \(\ti\), each player \(\pl\) pays a cost \(\tcost\) that is the sum of two components: the waiting cost, \ie the time spent in the queue by each of \(\pl\)'s jobs, and the penalty cost:
\begin{equation}
    \label{eq:cost-static}
    \tcost[\pureprof_{\ti}] = \tent (\period -\tact) + \tplty \E[\text{number of $\pl$'s jobs that are late at \(\ti\)}].
\end{equation}

Given an action profile \(\pureprof_{\ti}\), all of player \(\pl\)'s jobs are assumed to have the same probability of being late.
As a consequence
\begin{equation}
    \label{eq:expected-late}   
    \E[\text{number of $\pl$'s jobs that are late at $\ti$}]=\tent \late{\pureprof_{\ti}},
    \end{equation}
    which implies
    \begin{equation}
    \label{eq:cost-static-2}
    \tcost[\purep_{\ti}] = \tent (\period -\tact) + \tplty \tent \late{\purep_{\ti}}.
\end{equation}

A \acfi{DQM}\acused{DQM} is specified by a period length $\period$\glsadd{period}, a number of players $\players$ and a sequence of penalties $(\tplty)_{\ti \in \N}$\glsadd{penalty} that depend on the total number \(\tnent\) of jobs in the system.

Given the spillover effect, it is natural to investigate the asymptotic behavior of the number of jobs $\tnent$. 
In particular, if \(\tnent > \period\), at period \(\ti\) there are more jobs than time units; therefore, at least one job is necessarily late. 
Some jobs can be late even if \(\tnent \leq \period\). 
In the rest of the paper, we will be interested in bounding the number of jobs in the system, under various players' behaviors. 
More formally, we let \(\histset_{\ti} \coloneqq \parens*{\states\times\actions}^{\ti-1 }\times\states\) denote the set of histories of length \(\ti\) and \(\histset\coloneqq\cup_{\ti\ge 1} \histset_{\ti}\) the set of finite histories. 

\begin{definition}[Strategy]\label{de:strategy} 
     A \emph{strategy} is a function \(\strat \colon \histset \to \simplex(\actions)\)\glsadd{strat}. 
    When there is no risk of ambiguity, we will let \glsadd{tstrat}\(\tstrat \coloneqq \strat(\hist)\) denote player~\(\pl\)'s  mixed action at period \(\ti\).
\end{definition}

\begin{definition}[Stability of a strategy profile]\label{de:stability} 
    A strategy profile $\strat[]$ in a \ac{DQM} is said to be \emph{stable} if the number of jobs $\tnent$ is almost surely bounded, \ie
    \[\mathbb P_{\strat[]}\left[\exists M, \forall \ti \in \N, \tnent \leq M\right] = 1.
    \]
\end{definition}

The following assumption will be in place throughout the paper.
\begin{assumption}\label{hyp:rep}
    The following inequality holds: \(\period \geq \nplayers\).
\end{assumption}

Without \cref{hyp:rep}, no strategy profile in a \ac{DQM} could be stable, because at every period at most \(\period\) jobs leave the system, and \(\nplayers\) new jobs arrive. 

We study the stability of two different families of strategy profiles. 
In the first case, agents are myopically strategic, \ie they focus on the situation at the current period and play in a strategic way for that period, \ie 
they play at every period a correlated equilibrium.
We prove that the \ac{DQM} is stable under this strategy profile, provided the penalty costs $\tplty$ are large enough (potentially constant). 
In the second case, agents use no-regret strategies that depend on their number of jobs. 
We provide sufficient conditions for the stability of these strategies in the \ac{DQM}.

\section{Myopic strategic players}
\label{sec:myopic}

In this section, players are assumed to be strategic and myopic. The myopic assumptions implies that at period \(\ti\) players only consider the costs that they may pay at the current period without taking into consideration the effect that their actions have on the future of the game. 
In other words, players repeatedly play one-shot games with payoff functions $\parens{\tcost[][\pl][\st]}_{\pl \in \players}$.

We now introduce the definitions of myopic \acfi{NE}\acused{NE} and myopic \acfi{CCE}\acused{CCE}.

\begin{definition}[Myopic \ac{NE}]\label{de:strategic-profile}
    A strategy profile \(\strat[] \colon \histset \to \simplex(\actions^\players)\) is called a \emph{myopic \ac{NE}} if, for any history $\hist$\glsadd{history} with current state $\st$, the mixed action $\strat[](\hist)$ is a \acl{NE} of the one-shot game with jobs \(\st = (\tent)_{\pl\in \players}\).
\end{definition}

\begin{definition}[Myopic \ac{CCE}]\label{de:strategic-profile-coarse}
    A strategy distribution \(\strat[] \colon \histset \to \simplex(\actions^\players)\) is called a \emph{myopic \ac{CCE}} if, for any history $\hist$\glsadd{history} with current state $\st$, the distribution $\strat[](\hist)$ is a \acl{CCE} of the one-shot static game with jobs \(\st = (\tent)_{\pl\in \players}\).
\end{definition}

When the number of jobs is lower than \(\period\), any player who plays \(0\) is sure not to pay the penalty cost. 
Even in this case, there exist equilibria where some late jobs are late with positive probability. 
Therefore, the number of jobs may not be constant throughout the play, as detailed in \cref{suse:T>=k}.

We now analyze the family of one-shot games parameterized by the number of jobs owned by each player.

The rest of the section is organized as follows. 
We start by focusing on the one-shot game, first when  \(\period \ge \tnent\), and then 
when \(\period < \tnent\).
We conclude by analyzing the global queue.

\subsection{Equilibria in the one-shot game when \texorpdfstring{\(\period \geq \tnent\)}{T > k}}
\label{suse:T>=k}

We will show that, even when \(\period \geq \tnent\), in equilibrium some jobs are late with positive probability. 

We first prove that only the last \(\tnent\) actions are used in an equilibrium; hence, it is enough to consider the case  \(\period=\nent\).

\begin{lemma}[Dominated actions for \(\period > \tnent \)]
    \label{lem:static:domin}
    If \(\period > \tnent\), then for each player \(\pl\), any action \(\tact \in \braces{0,\dots, \period-\tnent-1}\) is strictly dominated by action \(\period-\tnent\). 
    Furthermore, for any  (mixed) action profile~\(\tmixedprof\)\glsadd{mixedp}, 
    \begin{equation}
    \label{eq:cost-T-k}      
    \tcost[\period-\tnent, \tmixedprof^{-\pl}] = \tent \tnent.
    \end{equation}
\end{lemma}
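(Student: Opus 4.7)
The plan is to establish the closed-form identity \eqref{eq:cost-T-k} first; the strict domination statement will then be an almost immediate consequence. Note that every player receives a fresh job at the beginning of every period, so $\tent \geq 1$ for every $\pl$, and in particular the factor $\tent$ in the cost formula \eqref{eq:cost-static-2} is strictly positive.

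The heart of the argument is a queue-occupancy claim: if $\pl$ plays $\period - \tnent$, none of $\pl$'s jobs can be late, whatever (possibly mixed) actions the other players take and whatever random tie-breaking the server uses. The idea is that at slot $\period - \tnent$ the $\tent$ jobs of $\pl$ enter the queue, so from that slot onward, as long as any of $\pl$'s jobs is still waiting, the queue is non-empty and the server is not idle. Letting $s^{\ast}$ denote the slot in which $\pl$'s last job is served, exactly one job is served at each slot of the window $\{\period - \tnent, \ldots, s^{\ast}\}$; these $s^{\ast} - (\period - \tnent) + 1$ served jobs consist of $\pl$'s $\tent$ jobs together with at most $\tnent - \tent$ jobs of the other players (all that they own in total). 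Hence $s^{\ast} - (\period - \tnent) + 1 \leq \tnent$, i.e.\ $s^{\ast} \leq \period - 1$, so every job of $\pl$ clears the queue on time. By \eqref{eq:expected-late} this gives $\late{\period - \tnent, \tmixedprof^{-\pl}} = 0$, and \eqref{eq:cost-static-2} then yields \eqref{eq:cost-T-k}.

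Strict domination then follows from a one-line estimate: for any $\tact \in \{0,\dots,\period - \tnent - 1\}$ and any mixed profile $\tmixedprof^{-\pl}$,
\[
\tcost[\tact, \tmixedprof^{-\pl}] \;\geq\; \tent\,(\period - \tact) \;>\; \tent \cdot \tnent \;=\; \tcost[\period - \tnent, \tmixedprof^{-\pl}],
\]
where the first inequality drops the non-negative penalty term and the strict inequality uses $\period - \tact > \tnent$ together with $\tent \geq 1$.

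The only delicate step is the queue-occupancy claim: one must be explicit about the within-slot ordering (arrivals enter first, then at most one job is served provided the queue is non-empty) and argue cleanly that the queue cannot go idle while any of $\pl$'s jobs is still waiting. Once that is in place, the counting argument bounding $s^{\ast}$ and the strict-domination inequality are both entirely routine.
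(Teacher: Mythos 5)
Your proof is correct and follows essentially the same route as the paper: show that action \(\period-\tnent\) guarantees zero lateness probability (the paper bounds the exit time by \(\tact+\tnent\le\period\) counting jobs arriving at or before \(\tact\), whereas you use a busy-period count, but the content is the same), deduce \eqref{eq:cost-T-k}, and then compare waiting costs. The only cosmetic difference is that the paper notes every action \(\tact\le\period-\tnent\) also has zero lateness and hence exact cost \(\tent(\period-\tact)\), while you merely lower-bound the dominated actions' costs by their waiting cost; both yield the strict domination.
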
 

\cref{eq:cost-T-k} shows that the cost \(\tent \tnent\) that player \(\pl\) pays when playing action $\period-\tnent$ is  independent of the other players' actions. 

\begin{proof}[Proof of \cref{lem:static:domin}]
    If player \(\pl\)'s jobs join the queue at time \(\tact\), then they leave the         queue no later than time \(\tact+l\), where \(l\) is the number of jobs that the queue at time \(\tact\) or before. 
    Since \(l\leq \tnent\), player \(\pl\)'s jobs  leave the queue no later than time \(\tact+\tnent\), which is smaller or equal than \(\period\) if \(\tact\) is smaller or equal than \(\period-\tnent\). 
    So, the probability \(\late{\tact, \tmixedprof^{-\pl}}\) that a specific job of player \(\pl\) is late is \(0\) and player \(\pl\)'s cost is \(\tent(\period-\tact)\).
    In particular, the cost of choosing action \(\period-\tnent\) is \(\tent \tnent\), which is strictly smaller than the cost of any \(\tact \in \braces*{0, \dots, \period-\tnent-1}\).
\end{proof}

\cref{{lem:static:domin}} implies that it is enough to consider the case  \(\period=\tnent\).
In the rest of this section, all the results will be proved under this hypothesis. 
The general case \(\period\ge\tnent\) can be obtained by renaming the actions.
  
The next theorem shows that, in any equilibrium, players can be divided into two groups: 
players in the first group choose the same mixed action and mix only on the two first non-dominated actions \(\period-\tnent\) and \(\period-\tnent+1\); 
players in the second group do not mix on \(\period-\tnent\) and put strictly positive weight on \(\period-\tnent+1\). 
The way they mix on the remaining non-dominated actions is not specified.

The symbol \(\socialcost\) \glsadd{socialcost} denotes the social cost of a profile \(\tmixedprof\), that is,
\begin{equation}\label{eq:SC}
  \socialcost\coloneqq\sum_{\pl\in\nplayers} \tcost.
\end{equation}
  
\begin{theorem}[Structure of Nash equilibria]\label{thm:structure-Nash}
    If \(\period \geq \tnent\), \(\tplty > \tnent^2\), and \(\tmixedprof\) is a Nash equilibrium, then:
    \begin{enumerate}[label=\emph{(\roman*)}, ref=(\roman*)]
    \item 
    \label{it:thm:structure-Nash-1}
    for each player \(\plb\) and action \(\tact[\ti][\plb] < \period - \tnent\), we have $\tmixed[\ti][\plb](\tact[\ti][\plb]) = 0$;
         
    \item
    \label{it:thm:structure-Nash-2}
    there exists a player \(\pl\) such that
        \begin{enumerate}[label=\emph{(\alph*)}, ref=(\alph*)]
        \item 
        \label{it:thm:structure-Nash-2-a}
        \(\tmixed(\period-\tnent) > 0\),
              
        \item
        \label{it:thm:structure-Nash-2-b}
        for all \(\tact > \period-\tnent+1\), we have \(\tmixed(\tact) = 0\);
        \end{enumerate}
    
    \item 
    \label{it:thm:structure-Nash-3}
    for every player \(\plb \neq \pl\), \ \(\tmixed[\ti][\plb](\period-\tnent+1) > 0\);
        
    \item 
    \label{it:thm:structure-Nash-4}
    for every player \(\plb\), if \(\tmixed[\ti][\plb](\period-\tnent) > 0\), then \(\tmixed[\ti][\plb] = \tmixed[\ti][\pl]\);
         
    \item 
    \label{it:thm:structure-Nash-5}
    \(\tnent^2-\tnent+1 \leq \socialcost \leq \tnent^2\).
    \end{enumerate}
\end{theorem}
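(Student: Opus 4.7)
The plan is to work under the WLOG reduction $T = k$ (so $T - k = 0$ and $T - k + 1 = 1$); this reduction, together with part (i), is immediate from the preceding dominated-actions lemma. I would then establish the remaining claims in the order (ii-a), a key indifference identity, (ii-b), (iv), (iii), and (v). Throughout I write $I_0 := \{j : \sigma^j(0) > 0\}$ and use $p^j(a)$ as shorthand for the probability that a specific job of $j$ is late given that $j$ plays action $a$ (the other players' actions being drawn from $\sigma^{-j}$).

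For (ii-a), suppose for contradiction $I_0 = \emptyset$. Since no one plays $0$, slot $0$ is idle, so only $k - 1$ of the $k$ jobs meet the deadline and $\sum_j k^j p^j \geq 1$. A deviation to $0$ by any $j$ has cost $k^j k$ (since $p^j(0) = 0$), so NE yields $C k^j p^j \leq k^j \E[a^j]$ for each $j$; summing, $C \leq \sum_j k^j \E[a^j] \leq k(k-1) < k^2$, contradicting $C > k^2$. For $j \in I_0$, the total number of time-$0$ jobs never exceeds $k$, so $p^j(0) = 0$; hence the cost per job at $0$ equals $k$, and NE indifference on $S^j$ yields the crucial identity $p^j(a) = a/C$ for every $a \in S^j$ (with $\geq a/C$ off support).

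The main obstacle is (ii-b). The key observation is that a late job at action $a \geq 1$ requires no other player to play $0$ in that realization (otherwise $t_0 = 0$ and the at-most-$k$ total jobs fit into slots $0, \dots, k - 1$); this factorizes $p^i(a) = q_i \cdot \tilde p^i(a)$ with $q_i := \prod_{j \neq i}(1 - \sigma^j(0))$. To rule out $a \in S^i$ with $a \geq 2$, I would write out $p^i(a)$ and $p^i(1)$ explicitly via FIFO/counting over the conditional configurations of $\sigma^{-i}$ given that no other player plays $0$, match them against the identities $p^i(a) = a/C$ and $p^i(1) \geq 1/C$, and show by direct algebra that the resulting constraints on the distributions of others' actions are incompatible when $C > k^2$ (the player-count constraint $k^i \leq k - \abs{\players} + 1 \leq k - 2$ from $\abs{\players} \geq 3$ enters here). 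Part (iv) then follows quickly: for any $j_1, j_2 \in I_0$, using $S^{j_l} \subseteq \{0, 1\}$, the equation $p^{j_l}(1) = 1/C$ simplifies to a symmetric product $\bigl(\prod_{j' \in I_0,\, j' \neq j_l} \sigma^{j'}(1)\bigr) R / k = 1/C$, where $R$ depends only on non-$I_0$ players; equating the expressions for $l = 1, 2$ forces $\sigma^{j_1}(1) = \sigma^{j_2}(1)$, hence $\sigma^{j_1} = \sigma^{j_2}$.

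For (iii), suppose $\sigma^j(1) = 0$ for some $j \neq i$. The case $j \in I_0$ is excluded by (iv) and (ii-b) except when $\sigma^i = \delta_0$, which forces $I_0 = \{i\}$. In the remaining case $j \notin I_0$ with $S^j \subseteq \{2, \dots, k - 1\}$: since $\sigma^i(0) > 0$, with positive probability $i$ plays $0$ and all jobs are then on time, so $p^j(b) \leq 1 - \sigma^i(0) < 1$ for every $b$; a direct cost comparison then shows that the deviation to action $1$ is strictly profitable, contradicting NE. Finally, for (v): writing $\socialcost = k^2 - \sum_j k^j \E[a^j] + C \sum_j k^j p^j$ and summing the NE no-deviation-to-$0$ inequalities yields $\socialcost \leq k^2$; conversely, (ii) gives player $i$'s cost exactly $k^i k$ and (iii) implies every $j \neq i$ has cost at least $k^j(k - 1)$ (since $1 \in S^j$), so $\socialcost \geq k^i k + (k - k^i)(k - 1) = k^2 - k + k^i \geq k^2 - k + 1$ as $k^i \geq 1$.
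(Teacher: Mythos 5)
Your skeleton matches the paper's up through the indifference identity $p^j(a)=a/C$ for actions in the support of a player who mixes on $0$, and parts (i), (ii-a) and (v) are fine and essentially the paper's arguments. But there is a genuine gap exactly where you flag ``the main obstacle'': for (ii-b) you never carry out the argument. Saying you ``would write out $p^i(a)$ and $p^i(1)$ explicitly via FIFO/counting over the conditional configurations of $\sigma^{-i}$'' and ``show by direct algebra'' that they contradict $p^i(a)=a/C$ is a plan, not a proof: the conditional law given that nobody else plays $0$ is an arbitrary product measure on $\{1,\dots,k-1\}^{I\setminus\{i\}}$, and the incompatibility is not routine (the naive comparison of $p^i(a)$ with $p^i(1)$ through the event ``everyone plays $1$'' only yields $a\ge k/k^i$, which is no contradiction). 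The source of the difficulty is your ordering: you attack (ii-b) before (iii) and (iv). The paper proves (iii) first, then the first half of (iv), which yields the clean structural fact that a job joining at time $1$ can be late only in the single event that \emph{all} $k$ jobs join at time $1$ (if anyone is at $0$, or anyone is at a time $\ge 2$, the at most $k-1$ jobs at time $1$ all clear by the deadline); this gives $p^i(1)=\prod_{j\neq i}\sigma^j(1)/k$, hence $C\prod_{j\neq i}\sigma^j(1)=k$, and (ii-b) then follows by bounding the cost of any $a>1$ below by $k^i k + k^i(k-a)$ using that same event. Since your (iv) and (iii) are in turn built on your (ii-b), the whole chain after (ii-a) is unsupported as written.

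There is a second gap in (iii). For $j\notin I_0$ with support in $\{2,\dots,k-1\}$ you claim that $p^j(b)<1$ makes $j$'s deviation to action $1$ ``strictly profitable by a direct cost comparison,'' but moving from $b\ge 2$ to $1$ strictly \emph{increases} $j$'s waiting cost, so you owe a quantitative bound on the drop in lateness probability that is not provided. The paper avoids this entirely by letting player $i$ (who mixes on $0$) deviate to the pure action $1$: if some $j\neq i$ puts no weight on $1$, then in every realization either $j$ is at $0$ (so at most $k$ jobs occupy times $0$ and $1$ and all clear) or $j$ is at a time $\ge 2$ (so at most $k-1$ jobs occupy times $0$ and $1$), hence $i$'s jobs are never late and $c^i(1,\sigma^{-i})=k^i(k-1)<k^i k$. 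Finally, your sub-case ``$j\in I_0$ with $\sigma^j=\delta_0$'' is not actually closed: by your (iv) it forces every player of $I_0$ to play $\delta_0$, and you still need an argument that such a profile cannot be an equilibrium rather than the assertion that it ``forces $I_0=\{i\}$.''
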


\begin{proof}
    \ref{it:thm:structure-Nash-1}
    \emph{The actions before \(\period-\tnent\) are not chosen.} 
    By \cref{lem:static:domin}, we know that all actions smaller than \(\period-\tnent\) are strictly dominated. 
    The result follows from the fact that Nash equilibria do not mix on dominated actions. 
    In the rest of the proof, we will assume that \(\period=\tnent\). 
    As mentioned before, if \(\period<\tnent\), then the proof can be easily adapted by translation: action \(0\) becomes \(\period - \tnent\), \(1\) becomes \(\period - \tnent +1\), etc.
    
    \ref{it:thm:structure-Nash-2}\ref{it:thm:structure-Nash-2-a}
    \emph{There exists a player \(\pl\) such that \(\tmixed(0) > 0\).}
    Assume \emph{ad absurdum} that \(\tmixed[\ti][\plb](0) = 0\) for each player~\(\plb\). 
    Then  at least one job is late. 
    This implies that 
    \begin{equation}\label{eq:E-late-jobs}
        \E[\text{number of late jobs}] = \sum_{\plb \in \players} \tent[\ti][\plb] \late[\plb]{\tmixedprof} \geq 1
    \end{equation}
    and there exists \(\pl\) such that \(\tent \late{\tmixedprof} \geq \tent/\tnent\). 
    Then, \[\tcost \geq \tent \tplty \late{\tmixedprof} \geq \tplty\tent/\tnent.\] 
    So if       \(\tplty > \tnent^2\), player \(\pl\)'s cost is strictly greater than \(\tent \tnent\), which is a contradiction. 
    Therefore, one player \(\pl\) mixes on \(0\) and       \(\tcost = \tent \tnent\).
    
    \ref{it:thm:structure-Nash-3}
    \emph{Every other player mixes on \(1\).}
    If this is not the case, then \(\tmixed[\ti][\plb](1)=0\)
    for some \(\plb \neq \pl\). 
    We now prove that \(\pl\) can profitably deviate by playing the pure action \(1\).
    Assume that, indeed, \(\pl\) chooses action \(1\).
    If \(\plb\) plays the pure action \(0\), then player \(\pl\)'s jobs joining the queue at \(1\) are not late because the number of jobs joining the queue at \(1\) is smaller than \(\tnent=\period\), so they are processed before the end of the period. 
    Else, if \(\plb\) does not play \(0\), then \(\plb\)'s action is not smaller than \(2\). 
    Then the number of jobs that join the queue at \(0\) or \(1\) is at most \(\period-1\) and these jobs are processed before the end of the period. 
  
    So, by deviating to \(1\), \(\pl\)'s jobs are not late and this leads to a cost \(\tcost[1, \tmixedprof^{-\pl}]) = \tent (\tnent-1)\). 
    Hence, the deviation is profitable and \(\tmixedprof\)  cannot be an equilibrium.
    
    \ref{it:thm:structure-Nash-4}    
    \emph{(First step) All players who mix on \(0\) mix on \(1\), and put the same the weight on action \(1\).}
    Suppose that at least two players mix on \(0\). 
    We first prove that they mix on \(1\) too. 
    Call these two players \(\pl\) and \(\plb\).
    Player \(\pl\) satisfies \ref{it:thm:structure-Nash-2}\ref{it:thm:structure-Nash-2-a}; therefore, by \ref{it:thm:structure-Nash-3}, player \(\plb\) mixes on \(1\).
    Symmetrically, player \(\plb\) satisfies \ref{it:thm:structure-Nash-2}\ref{it:thm:structure-Nash-2-a}; therefore, by \ref{it:thm:structure-Nash-3}, player \(\pl\) mixes on \(1\).
    This shows that both players mix on \(1\).
    
    We now prove that they put the same weight on \(1\).
    A job sent by any of these players can be late only if all jobs join the queue at time \(1\).
    Indeed, if at least one job joins the queue at \(0\), then none of the jobs that join the queue at \(1\) can be late because there are \(\period-1\) units of time to process them.
    
    Therefore, a job departing at \(1\) is late only if all other jobs also depart at \(1\), i.e.,  all players play action \(1\). 
    If this happens, then the probability that a job is late is \(1/\tnent\). 
    Therefore, the probability that a job owned by player \(\pl\) is late, conditionally on the fact that  \(\pl\)  plays \(1\), is equal to
    \begin{equation*}
        \late{1, \tmixedprof^{-\pl}} = \frac{\proba[\tact[\plb]=1\ \forall \plb \neq \pl]}{\tnent} = \frac{\prod_{\plb\neq \pl} \tmixed[\ti][\plb] (1)}{\tnent}.
    \end{equation*}
    
    Furthermore, since in a Nash equilibrium player \(\pl\)  is indifferent between \(0\) and \(1\), we have 
    \begin{equation}
    \label{eq:cost-indiff}
        \tcost[0,\tmixedprof^{-\pl}]
        = \tent \tnent 
        = \tcost[1,\tmixedprof^{-\pl}] 
        = \tent\parens*{\tnent-1+\frac{\prod_{\plb\neq \pl} \tmixed[\ti][\plb] (1)}{\tnent}\tplty}.
    \end{equation} 
    Thus, we have 
    \begin{equation}
    \label{eq:prob-penalty}
        \prod_{\plb\neq \pl} \tmixed[\ti][\plb](1) \tplty = \tnent.
    \end{equation}
    This applies to any
    \(\plb\) such that \(\tmixed[\ti][\plb](0) > 0\).
    Hence \(\tmixed[\ti][\pl](1) = \tmixed[\ti][\plb](1)\).
    
    \ref{it:thm:structure-Nash-2} \ref{it:thm:structure-Nash-2-a} and \ref{it:thm:structure-Nash-4}
    \emph{(second step) 
    Players who mix on \(0\) do not mix on any action strictly larger than \(1\).}
    If player \(\pl\) mixes on an action \(\tact > 1\), then one of \(\pl\)'s jobs is late whenever all other players play  \(1\). 
    This implies that the probability to be late playing action \(\tact > 1\) is greater than the probability that all other players play \(1\), so using~\eqref{eq:prob-penalty} we get
    \begin{equation*}
        \tcost[\tact, \tmixedprof^{-\pl}] 
        \geq \tent\braces*{\tnent - \act + \prod_{\plb\neq \pl}\tmixed[\ti][\plb](1)\tplty} 
        = \tent \tnent + \tent (\tnent - \tact) > \tent \tnent.
    \end{equation*}
    Since all players who mix on \(0\) only mix between \(0\) and \(1\) with the same weight on \(1\), they actually play the same mixed action, which proves \ref{it:thm:structure-Nash-4}.
    
    \ref{it:thm:structure-Nash-5}     \emph{Social cost.}
    First we prove  that the social cost is smaller than \(\tnent^2\). 
    If \(\tmixedprof\) is a Nash equilibrium, then for every player $\plb$, we have
    \begin{equation}
    \label{eq:Nash-eq-cond}    
        \tcost[\tmixedprof][\plb] \leq \tcost[0, \tmixedprof^{-\plb}][\plb] = \tent[\plb]\tnent.
    \end{equation}
    Then, 
    \(\socialcost \leq \sum_{\plb \in \players} \tent[\plb] \tnent = \tnent^{2}\).
    We now prove that the social cost is greater than \(\tnent^2 - \tnent + \tent\). 
    One player \(\pl\) mixes on \(0\) and incurs a cost equal to \(\tent \tnent\), and all other players 
    \(\plb \neq \pl\) mix on \(1\), so their cost is at least \(\tent[\plb](\tnent-1)\). 
    Thus,
    \begin{equation*}
    \label{eq:SC-cond}
        \socialcost \geq (\tnent-\tent)(\tnent -1) + \tent\tnent = \tnent^2 - \tnent + 1. \qedhere
    \end{equation*}
\end{proof}

\begin{corollary}
    For any Nash equilibrium $\mixedp$, there is positive probability that a job is late, that is 
    \[
    \mathbb E[\emph{number of late jobs}] > 0.
    \]
\end{corollary}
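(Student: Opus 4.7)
The plan is to do a case analysis on the support of the ``anchor'' player $\pl$ supplied by \cref{it:thm:structure-Nash-2} of \cref{thm:structure-Nash} and to chain equilibrium indifference conditions into strict positivity of late-job probabilities. As in the proof of \cref{thm:structure-Nash}, I reduce to $\period = \tnent$ by renaming actions, so that the only non-dominated actions are $0$ and $1$ for $\pl$, with $\tmixed(0) > 0$.

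If $\tmixed(1) > 0$ as well, then indifference of $\pl$ between $0$ and $1$, combined with $\tcost[0, \tmixedprof^{-\pl}] = \tent \tnent$ from \cref{lem:static:domin} and $\tcost[1, \tmixedprof^{-\pl}] = \tent(\tnent - 1) + \tplty \tent \late{1, \tmixedprof^{-\pl}}$, yields $\late{1, \tmixedprof^{-\pl}} = 1/\tplty > 0$, so a job of $\pl$ is late with positive probability whenever $\pl$ plays $1$. Otherwise $\pl$ plays the pure action $0$; by \cref{it:thm:structure-Nash-3,it:thm:structure-Nash-4}, no other player can put mass on $0$, hence each $\plb \neq \pl$ has support in $\{1, \dots, \tnent-1\}$ with $\tmixed[\ti][\plb](1) > 0$. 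I then split once more: if some $\plb \neq \pl$ also plays some action $a \ge 2$ with positive probability, then the indifference of $\plb$ between $1$ and $a$ gives
\[
\tplty\, \late[\plb]{a, \tmixedprof^{-\plb}} \;=\; (a - 1) + \tplty\, \late[\plb]{1, \tmixedprof^{-\plb}} \;\ge\; 1,
\]
so a job of $\plb$ is late with positive probability when $\plb$ plays $a$.

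The remaining configuration, in which $\pl$ is pure $0$ and every other player is pure $1$, must be ruled out as a Nash equilibrium. I would pick any $\plb \neq \pl$ and consider the deviation to action $2$; the resulting arrival profile has $\tent$ jobs at time $0$, $\tnent - \tent - \tent[\ti][\plb]$ at time $1$, and $\tent[\ti][\plb]$ at time $2$. Since every player owns at least one job and $\nplayers \ge 3$, one has $\tent[\ti][\plb] \le \tnent - 2$; using this together with \cref{hyp:rep}, the cumulative arrivals from any time $a$ to the end of the period never exceed the remaining capacity $\period - a$, so no job is late after the deviation. The deviation therefore lowers $\plb$'s cost from $\tent[\ti][\plb](\tnent - 1)$ to $\tent[\ti][\plb](\tnent - 2)$, contradicting the equilibrium property. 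The main obstacle is precisely this last step: ruling out the ``everybody pure'' configuration requires the sharp queue-capacity check, and it is this check at $a = 2$ that crucially consumes the assumption $\nplayers \ge 3$.
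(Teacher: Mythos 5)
Your proof is correct and follows essentially the same skeleton as the paper's: split on whether the anchor player $\pl$ supplied by \cref{thm:structure-Nash}\ref{it:thm:structure-Nash-2} plays $\period-\tnent$ purely, and kill the all-pure configuration by letting a third player deviate to $\period-\tnent+2$ (this is also exactly where the paper spends the hypothesis $\abs{\players}\ge 3$). The one substantive difference works in your favor: in the sub-case where $\pl$ is pure on $\period-\tnent$ and some $\plb$ genuinely mixes between $\period-\tnent+1$ and a later action $\act[\plb]$, you extract positivity of the late-job probability from $\plb$'s indifference condition, $\tplty\,\late[\plb]{\act[\plb],\tmixedprof^{-\plb}}=(\act[\plb]-1)+\tplty\,\late[\plb]{1,\tmixedprof^{-\plb}}\ge 1$, whereas the paper's written proof stops after deriving the contradiction for a player who is \emph{pure} on $\period-\tnent+1$ and leaves this concluding step implicit. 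Your treatment of the case where $\pl$ itself mixes (indifference between the two actions forces $\late{1,\tmixedprof^{-\pl}}=1/\tplty>0$) is an equally valid variant of the paper's observation that all players then simultaneously play $\period-\tnent+1$ with positive probability.
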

  
\begin{proof}
  By \cref{thm:structure-Nash} \ref{it:thm:structure-Nash-2} \ref{it:thm:structure-Nash-2-a}, we know that there exists a player $\pl$ who mixes on $\period-\tnent$ with positive probability. 
  We consider two cases, based on this probability. 
  If $\pl$ plays with probability strictly less than \(1\), then by \cref{thm:structure-Nash} \ref{it:thm:structure-Nash-3} and \cref{thm:structure-Nash} \ref{it:thm:structure-Nash-4} we know that all the players are playing simultaneously in $\period-\tnent+1$ with positive probability. 
  Therefore, one of them is late.

  Let us assume that $\pl$ plays with probability \(1\) the action $\period-\tnent$. 
  If two players mix on $\period-\tnent$, we can apply  \cref{thm:structure-Nash} to each of them and therefore by \ref{it:thm:structure-Nash-3}, both put positive weight on \(\period-\tnent+1\). 
  So, \(\pl\) is the only player playing \(\period-\tnent\) and by \cref{thm:structure-Nash} \ref{it:thm:structure-Nash-4} every other player plays \(\period-\tnent+1\) with positive probability. 
  Let \(\plb \neq \pl\) and suppose that \(\tmixed[\ti][\plb](\period-\tnent+1) = 1\). 
  Then, there exists another \(\plb'\) different from both \(\pl\) and \(\plb\) (because \(\players \ge 3\)). 
  Player \(\pl\) plays \(\period-\tnent\), \(\plb\) plays \(\period-\tnent+1\), so \(\plb'\) is not late when it plays \(\period-\tnent+2\). 
  Therefore, \(\plb'\) strictly prefers \(\period-\tnent+2\) to \(\period-\tnent+1\), so it cannot put positive weight on \(\period-\tnent+1\), which is a contradiction.
\end{proof}

\subsection{Equilibria when \texorpdfstring{\(\period < \tnent\)}{T < k}}

We now study the game when the period length is smaller than the number of jobs, i.e., \(\period < \tnent\).

\begin{theorem}[\Aclp{CCE}]
\label{thm:cce_more_players}
    If \(\tnent > \period\)  and \(\tplty > \tnent^{2} \period \), then there is a unique \acl{CCE}, which is actually a pure equilibrium where all players play \(0\).
\end{theorem}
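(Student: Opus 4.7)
The plan is to first verify that the pure profile $\mathbf{0} = (0,\ldots,0)$ is a pure Nash equilibrium (hence a \ac{CCE}), and then to prove uniqueness by a pointwise strict comparison on the support of any candidate \ac{CCE}.

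For the equilibrium verification, under $\mathbf{0}$ each player $\pl$ has expected cost exactly $\tent\period + \tplty\tent(\tnent-\period)/\tnent$. I would bound the cost of a unilateral deviation to an action $a > 0$ by a direct case split on whether $a \le \tnent - \tent$ (so $\pl$'s jobs enter a busy queue behind everyone else, yielding lateness $\min(\tent,\tnent-\period)$) or $a > \tnent - \tent$ (so the server idles before $\pl$'s jobs arrive, yielding lateness $\max(0,a+\tent-\period)$), and computing in each case. The hypothesis $\tplty > \tnent^{2}\period$ together with $\tnent - \tent \ge \nplayers - 1 \ge 2$ (from \cref{hyp:rep} and $\nplayers \ge 3$) is what makes every such deviation strictly worse than playing $0$.

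For uniqueness, let $\tau\in\simplex(\actions^{\players})$ be any \ac{CCE}. Summing the deviation-to-$0$ CCE inequality $\mathbb E_{\tau}[c^{\pl}(\pureprof)] \le \mathbb E_{\tau}[c^{\pl}(0,\pureprof^{-\pl})]$ over all $\pl$ and rearranging, I would obtain
\[
\tplty\,\mathbb E_{\tau}\bigl[I(\pureprof) + \alpha(\pureprof)\bigr] \;\le\; \mathbb E_{\tau}\Bigl[\sum_{\pl\in\players} \tent\, \tact \Bigr],
\]
where $I(\pureprof) := (\text{late jobs under }\pureprof) - (\tnent-\period) \ge 0$ is the number of idle server slots and $\alpha(\pureprof) := (\tnent-\period) - \sum_{\pl}\tent\,\late{0,\pureprof^{-\pl}} \ge 0$ is the slack in a uniform bound on the single-player deviation lateness. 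Writing $M_0(\pureprof) := \sum_{\pl:\tact=0}\tent$ and splitting the sum in $\alpha$ according to whether $\pl$ is already in batch $0$ (contributing $\tent(M_0-\period)^+/M_0$, summing to $(M_0-\period)^+$) or not (each contributing at most $\tent(\tnent-\period)/\tnent$, since in that case the hypothetical batch $0$ has size $M_0+\tent \le \tnent$), direct algebra yields $\alpha(\pureprof) \ge (\tnent-\period)M_0/\tnent$ when $M_0 \le \period$ and $\alpha(\pureprof) \ge \period(\tnent-M_0)/\tnent$ when $M_0 > \period$.

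The crux is then to show the pointwise strict inequality $\tplty(I(\pureprof)+\alpha(\pureprof)) > \sum_{\pl}\tent\,\tact$ for every $\pureprof \neq \mathbf{0}$; combined with the easy bound $\sum_{\pl}\tent\,\tact \le (\period-1)(\tnent-M_0)$ (since players in batch $0$ contribute $0$), a three-way case analysis on $M_0(\pureprof)$ reduces matters to three thresholds: $\tplty > (\period-1)\tnent$ (case $M_0 = 0$, using $I \ge 1$), $\tplty > (\period-1)\tnent/\period$ (case $M_0 > \period$), and the worst one $\tplty > (\period-1)(\tnent-1)\tnent/(\tnent-\period)$ (attained at $M_0 = 1$ in the case $1 \le M_0 \le \period$). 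Elementary manipulation using $\tnent - \period \ge 1$ (which gives $\tnent\period(\tnent-\period) \ge \tnent\period$) shows that $\tnent^{2}\period$ strictly exceeds all three thresholds, so the strict pointwise inequality holds; taking $\tau$-expectation then forces $\tau(\{\pureprof \neq \mathbf{0}\}) = 0$, yielding the desired uniqueness. The main technical obstacle is the asymmetric self-interaction inside $\alpha$---a player already in batch $0$ contributes actual lateness while a player outside it contributes a hypothetical lateness for a batch enlarged by $\tent$ jobs---which is what produces the two distinct regimes in the lower bound on $\alpha$.
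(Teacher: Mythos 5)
Your proposal is correct, and it follows the same overall architecture as the paper's proof---sum the deviation-to-$0$ \ac{CCE} inequalities over the players and derive a pointwise strict contradiction for every profile $\pureprof\neq\boldsymbol{0}$ in the support---but the crucial quantitative estimate is obtained by a genuinely different decomposition. The paper isolates the player(s) whose action is maximal and proves (its \cref{cl:prob-i-late}) that for such a player the lateness gap $\late{\pureprof}-\late{0,\pureprof[-\pl]}$ is at least $1/(\numsim\tnent)$, via a case analysis on how many jobs arrive \emph{before} the last batch; summing over that batch yields the aggregate bound $\sum_{\pl}\tent\parens*{\late{\pureprof}-\late{0,\pureprof[-\pl]}}\ge 1/\tnent$ (\cref{cl:-1/k}), which is then played against the crude bound $\sum_\pl\tent\act\le\period\tnent$. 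You instead split the same aggregate quantity into the number $I(\pureprof)$ of idle server slots plus a slack term $\alpha(\pureprof)$ measured against the uniform bound $(\tnent-\period)/\tnent$ on the deviation lateness, and run the case analysis on the size $M_0$ of the time-$0$ batch rather than on the last batch; your worst case ($M_0=1$, $\tnent-\period=1$) reproduces essentially the same $1/\tnent$ margin, and your sharper bound $\sum_\pl\tent\act\le(\period-1)(\tnent-M_0)$ comfortably fits under the threshold $\tplty>\tnent^{2}\period$. I checked the individual claims ($\alpha\ge 0$, the two regime-dependent lower bounds on $\alpha$, $I\ge 1$ when $M_0=0$, and the three threshold comparisons) and they all hold. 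Your approach also buys one small thing the paper leaves implicit: you verify directly that $\boldsymbol{0}$ is a Nash equilibrium, whereas the paper relies on nonemptiness of the set of \acp{CCE} (via existence of a Nash equilibrium) to conclude that the point mass on $\boldsymbol{0}$ is indeed the unique \ac{CCE}. The only caveat is that your write-up should make explicit that the number of late jobs under a pure profile is deterministic and equals $(\tnent-\period)+I(\pureprof)$, since that identity is what makes the decomposition exact.
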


To show that the support of any \ac{CCE} is the singleton \(\boldsymbol{0}\), we proceed by contradiction. 
We assume the existence of a \ac{CCE} whose support is not \(\boldsymbol{0}\); we sum the cost of unilateral deviations to  \(0\) for each player and show that the sum is negative. 
As a consequence, there is at least one deviation cost which is negative, which shows that this player has a profitable deviation.
The proof is not ``constructive'' in the sense that the dissatisfied player is not designated outright.

  \begin{proof}[Proof of \cref{thm:cce_more_players}]
  Let \(\tcorreq\) be a \ac{CCE}. 
  We have
\begin{equation}
\label{eq:cost-CCE}
\tcost[\tcorreq] 
= \sum_{\pureprof\in \actions^{\players}} \tcorreq(\pureprof)  \tcost[\pureprof] = \sum_{\pureprof\in \actions^{\players}} \tcorreq(\pureprof) \tent \braces*{\period -\purep[\pl]+\late{\purep}\tplty}.
\end{equation}
  Since \(\tcorreq\) is a coarse correlated equilibrium, the cost that \(\pl\) obtains by a unilaterally deviating to \(0\) is not smaller then the cost that  \(\pl\)  incurs under $\tcorreq$, that is,
  \begin{equation}
  \label{eq:cost-dev-i}
    \sum_{\pureprof\in \actions^{\players}} \tcorreq(\pureprof) \tent \braces*{\period + \late{0, \purep[-\pl]}\tplty} 
    \geq 
    \sum_{\pureprof\in \actions^{\players}} 
    \tcorreq(\pureprof) \tent \braces*{\period - \purep[\pl]+\late{\pureprof} \tplty},
  \end{equation}
  which leads to
  \begin{equation}
  \label{eq:cost-dev-i-implic}
    \sum_{\pureprof\in \actions^{\players}} 
    \tcorreq(\pureprof) \tent \braces{\late{0, \purep[-\pl]}\tplty - \purep[\pl]+\late{\pureprof} \tplty} \geq 0.
  \end{equation}

A player who was originally playing \(0\) is actually not deviating. 
Therefore, the term of the sum that corresponds to players who  play \(0\) are equal to zero. 
  Therefore, in \eqref{eq:cost-dev-i-implic} we can sum  over all players who do not play \(0\) and get
  \begin{equation}
    \label{eq:player-no-play-0}
    \sum_{\pureprof \in \actions^{\players} \setminus \{\boldsymbol{0}\}} \tcorreq(\pureprof)\braces*{\sum_{\pl\in \players} \tent (\late{{0, \pureprof[-\pl]}}-\late{\pureprof})\tplty + \tent \act} \geq 0.
  \end{equation}

 A job that joins the queue after time \(0\), has a higher probability of being late:
  \begin{equation}
    \label{eq:player-i-late}
    \late{\pureprof} \ge \late{0, \purep[-\pl]}
  \end{equation}
The following claim refines the above statement.
\begin{claim}
\label{cl:prob-i-late}
Given an action profile \(\pureprof\neq\boldsymbol{0}
\), if 
\begin{equation}
\label{eq:a-max}
\purep[\plb] = \max_{\pl\in\players}(\purep[\pl]),
\end{equation}
then
\begin{equation}
\label{eq:prob-late-j-max}
\late[\plb]{\pureprof} \ge \frac{1}{\numsim\tnent} + \late[\plb]{0,\pureprof[\plb]}, 
\end{equation}
where 
\begin{equation}
\label{eq:n-jobs-a-j}
\numsim \coloneqq \sum_{\pl \in \players \colon \act = \act[\plb]} \tent
\end{equation}
is the number of jobs that join the queue at time \(\act[\plb]\).
\end{claim}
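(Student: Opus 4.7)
The plan is to turn both probabilities into closed-form expressions by exploiting the FIFO structure of the queue, and then compare them through a short case analysis.

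First I would note that, given the arrival pattern of all $\tnent$ jobs (random tie-breaking only permutes the labels within a batch of simultaneous arrivals), the service time of the $k$-th job in arrival order is deterministic:
\[
s_k = \max\bigl(s_{k-1}+1,\, t_k\bigr), \qquad s_0 = -1,
\]
and job $k$ is late iff $s_k > \period - 1$. Under $\pureprof$, all $\ent[\plb]$ jobs of $\plb$ lie in the batch at the maximal arrival time $\act[\plb]$, whose size is $\numsim$; a fixed $\plb$-job $J$ therefore has uniform position $P\in\{1,\ldots,\numsim\}$ in that batch, overall rank $(\tnent-\numsim)+P$, and is served at time $\sigma^* + P$, where
\[
\sigma^* \coloneqq \max\bigl(s_{\tnent-\numsim},\, \act[\plb]-1\bigr) \ \geq\ \max\bigl(\tnent-\numsim-1,\, \act[\plb]-1\bigr).
\]
Consequently $\late[\plb]{\pureprof} = \min\!\bigl(1,\,(\numsim+\sigma^*+1-\period)^+/\numsim\bigr)$. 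Switching to $(0,\pureprof[-\plb])$ merges the $\ent[\plb]$ jobs into the time-$0$ batch, enlarging it from $\numsim[0]$ to $m \coloneqq \numsim[0]+\ent[\plb]$ jobs. Since batch $0$ drains with no idle time, a uniformly positioned $J$ is served at $P'-1$ with $P'\in\{1,\ldots,m\}$, yielding $\late[\plb]{0,\pureprof[-\plb]} = (m-\period)^+/m$.

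It then remains to verify $\late[\plb]{\pureprof} - \late[\plb]{0,\pureprof[-\plb]} \geq 1/(\numsim\tnent)$ by cases. \emph{(i)} If $m\leq\period$, the second term vanishes; combining $\sigma^*+1\geq\tnent-\numsim$ with $\tnent>\period$ gives $\late[\plb]{\pureprof}\geq(\tnent-\period)/\numsim\geq 1/\numsim$. \emph{(ii)} If $\sigma^*+1\geq\period$, then $\late[\plb]{\pureprof}=1$ while, using $m\leq\tnent$, $\late[\plb]{0,\pureprof[-\plb]}\leq 1-\period/m\leq 1-\period/\tnent$, leaving a gap of at least $\period/\tnent\geq 1/\tnent$. \emph{(iii)} In the remaining range both probabilities lie strictly inside $(0,1)$, and clearing denominators turns the claim into
\[
\period\,\numsim + m\bigl(\sigma^*+1-\period\bigr) \ \geq\ m/\tnent;
\]
substituting $\sigma^*+1\geq\tnent-\numsim$ and $m\leq\tnent$ further reduces this to $(\tnent-\numsim)(\tnent-\period)\geq 1$, which holds whenever $\tnent>\numsim$ since both factors are then positive integers. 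In the degenerate subcase $\tnent=\numsim$ (all non-$\plb$ jobs also land at $\act[\plb]$) one has $\numsim[0]=0$ and $\sigma^*=\act[\plb]-1$, so that $p_{\mathrm{batch}}-p'_{\mathrm{batch}}=\act[\plb]/\numsim+\period(\numsim-\ent[\plb])/(\ent[\plb]\numsim)\geq\act[\plb]/\tnent\geq 1/\tnent$.

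The delicate point is case \emph{(iii)}: the small instance $\numsim=\period=3$, $\tnent=4$, $\ent[\plb]=3$, $\act[\plb]=1$, $\numsim[0]=1$ already attains equality in the stated inequality, so the argument must simultaneously exploit both the lower bound $\sigma^*+1\geq\tnent-\numsim$ (the queue can serve at most one job per time unit before $\act[\plb]$) and the upper bound $m\leq\tnent$ (the time-$0$ and $\plb$-jobs together cannot exceed the total); either estimate used in isolation is too weak.
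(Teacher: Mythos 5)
Your proof is correct and takes essentially the same route as the paper's: both express \(\late[\plb]{\pureprof}\) as the exact fraction of late jobs in the batch arriving at the maximal time \(\act[\plb]\), bound the number of that batch's jobs served on time by \(\period-(\tnent-\numsim)\) (your \(\sigma^*+1\ge\tnent-\numsim\)), control the deviation probability using \(\tnent>\period\) together with \(m\le\tnent\), and conclude by an elementary case analysis. The only differences are organizational---you compute \(\late[\plb]{0,\pureprof[-\plb]}\) exactly as \((m-\period)^{+}/m\) before relaxing it to the paper's bound \((\tnent-\period)/\tnent\), and you split cases along \(m\) and \(\sigma^*\) rather than along \(\tnent-\numsim\)---and your tight instance is a nice confirmation that the constant \(1/(\numsim\tnent)\) cannot be improved.
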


\begin{proof}
First of all notice that in \eqref{eq:a-max} we have \(\purep[\plb]>0\).
Call \(\numpre\) the number of jobs that join the queue at time \(\act[\plb]\) and leave the system before the deadline \(\period\).
Since \(\period\le\tnent\), some jobs are late; more precisely, \(\numsim-\numpre\) jobs that join the queue at \(\purep[\plb]\) are late.
Then
\begin{equation}
\label{eq:p-jobs-late-a-j}
\late[\plb]{\pureprof} = \frac{\numsim-\numpre}{\numsim}.
\end{equation}
Moreover, \(\late[\plb]{0, \pureprof[-\plb]} \leq (\tnent-\period)/\tnent\), so
\begin{equation}
\label{eq:p-jobs-late-a-j-0}
\late[\plb]{\pureprof} - \late[\plb]{0,\pureprof[-\plb]} 
\geq \frac{\numsim-\numpre}{\numsim} - \frac{\tnent-\period}{\tnent}
= \frac{-\numpre\tnent + \period \numsim}{\numsim\tnent}.
\end{equation}
 However, the maximum number of jobs that can leave the system without being late is \(\period\).
 
Under the action profile \(\pureprof\),  \(\tnent-\numsim\) jobs join the queue strictly before time \(\act[\plb]\). 
To finish the proof, we consider several cases related to the value of \(\tnent-\numsim\):

  \begin{itemize}
  \item
  If \(0 < \tnent - \numsim \le \period\), no more than $\period - (\tnent-\numsim)$ jobs joining the queue at time $\act[\plb]$ can arrive on time, i.e., \(\numpre \leq \period - (\tnent-\numsim)\). 
  Using this inequality in \eqref{eq:p-jobs-late-a-j-0}, we obtain
  \begin{equation*}
  \label{eq:p-late-a-0}
    \begin{aligned}
    \late[\plb]{\pureprof} - \late[\plb]{0, \pureprof[-\plb]} & \geq \frac{-(\period - \tnent + \numsim)\tnent + \period \numsim}{\numsim\tnent}
    \\ & = \frac{\period}{\tnent} - \frac{\period-\tnent}{\numsim} - 1
    \\ & = \braces*{\period - \tnent}\braces*{\frac{1}{\tnent} - \frac{1}{\numsim}}
    \\ & = \braces*{\tnent - \period}\frac{\tnent - \numsim}{\tnent \numsim}
    \\ & \ge \frac{1}{\tnent \numsim},
    \end{aligned}
  \end{equation*}
  because in this case \(\tnent-\numsim \ge 1\). 
  
  \item If \(\tnent-\numsim = 0\), then all the jobs join the queue at a time greater than \(1\); therefore \(\numpre \leq \period-1\).

  Since in this case \(\tnent=\numsim\), after some simplifications \eqref{eq:p-jobs-late-a-j-0} becomes 
  \begin{equation}
  \label{eq:p-late-a-0-2nd-case}
    \late[\plb]{\pureprof} - \late[\plb]{0, \pureprof[-\plb]} \geq \frac{-\numpre+\period}{\tnent} \geq \frac{-(\period-1)+\period }{\tnent} = \frac{1}{\tnent} \geq \frac{1}{\tnent \numsim}.
  \end{equation} 
  
  \item if \(\tnent - \numsim > \period\), then no jobs can avoid being late and \(\numpre=0\). 
  Therefore, \eqref{eq:p-jobs-late-a-j-0} becomes
  \begin{equation*}
  \late{\pureprof} - \late{0, \pureprof[-\pl]} \geq \frac{\period}{\tnent} \geq \frac{1}{\tnent \numsim}.
  \end{equation*}
\end{itemize}
This concludes the proof of the claim.  
\end{proof}

\begin{claim}
\label{cl:-1/k}
For any \(\pureprof \neq \boldsymbol{0}\), we have
  \begin{equation}
  \label{eq:-1/k}
      -\frac{1}{\tnent} \geq \sum_{\pl\in \players} \tent (\late{{0, \pureprof[-\pl]}}-\late{\pureprof}).
  \end{equation}
\end{claim}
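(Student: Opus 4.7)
The plan is to prove the equivalent form
\[
\sum_{\pl \in \players} \tent\bigl(\late{\pureprof} - \late{0, \purep[-\pl]}\bigr) \geq \frac{1}{\tnent},
\]
whose summands are all non-negative thanks to~\eqref{eq:player-i-late}. It therefore suffices to exhibit a subset of players whose contributions already aggregate to at least $1/\tnent$; the natural candidates are the ``latest'' players, for whom Claim~\ref{cl:prob-i-late} supplies a quantitative lower bound.

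I would let $J^{\star} \coloneqq \{\plb \in \players : \purep[\plb] = \max_{\pl \in \players} \purep[\pl]\}$. Since $\pureprof \neq \boldsymbol{0}$, this common maximum is strictly positive, every player in $J^{\star}$ plays the same action, and they all share the same value of $\numsim$. By the definition~\eqref{eq:n-jobs-a-j}, $\numsim$ equals the total number of jobs joining the queue at that maximum time, so
\[
\numsim = \sum_{\plb \in J^{\star}} \tent[\plb].
\]

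Applying Claim~\ref{cl:prob-i-late} to every $\plb \in J^{\star}$, multiplying the resulting inequality by $\tent[\plb]$, and summing over $J^{\star}$ then gives
\[
\sum_{\plb \in J^{\star}} \tent[\plb]\bigl(\late[\plb]{\pureprof} - \late[\plb]{0, \pureprof[-\plb]}\bigr) \geq \frac{1}{\numsim\, \tnent}\sum_{\plb \in J^{\star}} \tent[\plb] = \frac{1}{\tnent}.
\]
Adding the non-negative contributions from players outside $J^{\star}$, which by \eqref{eq:player-i-late} only enlarge the left-hand side, preserves the inequality and yields the claim.

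The argument is essentially bookkeeping once Claim~\ref{cl:prob-i-late} is in hand; the only point requiring care is to verify that $\numsim$ in~\eqref{eq:n-jobs-a-j} exactly matches $\sum_{\plb \in J^{\star}} \tent[\plb]$, so that the per-player bounds $\tent[\plb]/(\numsim\, \tnent)$ telescope into the clean value $1/\tnent$, uniformly in $\pureprof$.
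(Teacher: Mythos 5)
Your proof is correct and follows essentially the same route as the paper's: the same split of players into those at the maximal action versus the rest, the same application of Claim~\ref{cl:prob-i-late} to each maximal player, and the same observation that the weights $\tent[\plb]$ sum to $\numsim$ so the bound telescopes to $1/\tnent$. The only cosmetic difference is that you work with the sign-flipped form of the inequality.
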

  
\begin{proof}
We split the players into two groups: players whose jobs join the queue at time \(\purep[\plb]\) as defined in \eqref{eq:a-max} and the remaining ones. 
We have
  \begin{equation}
  \label{eq:sum-k-late}
  \begin{split}
    \sum_{\pl\in \players} \tent (\late{{0, \pureprof[-\pl]}}-\late{\pureprof}) 
    &= \sum_{\pl \in \players \colon \act \neq \act[\plb]} \tent (\late{{0, \pureprof[-\pl]}}-\late{\pureprof}) \\ 
    &\quad+ \sum_{\pl \in \players \colon \act = \act[\plb]} \ent(\late[\pl]{{0, \pureprof[-\pl]}}-\late[\pl]{\pureprof}). 
  \end{split}
  \end{equation}
  The first term is nonpositive because of  \eqref{eq:player-i-late}
  and the second term is nonpositive because of (\ref{eq:prob-late-j-max}), leading to
  \begin{equation}
    \label{eq:sum-k-i-p-late}
    \sum_{\pl\in \players} \tent (\late{{0, \pureprof[-\pl]}}-\late{\pureprof}) \leq -\sum_{\pl \in \players \colon \act = \act[\plb]} \tent\frac{1}{\numsim\tnent}.
  \end{equation}

The inequality in \eqref{eq:sum-k-i-p-late} yields
  \begin{equation*}
      \sum_{\pl\in \players} \tent (\late{{0, \pureprof[-\pl]}}-\late{\pureprof}) \leq -\frac{\numsim}{\numsim\tnent} = -\frac{1}{\tnent},
  \end{equation*}
which proves the claim.
\end{proof}  

The combination of \cref{cl:-1/k} and \cref{eq:player-no-play-0} yields
\begin{equation}
\label{eq:0-sum-tau}
\begin{split}
    0 & \le \sum_{\pureprof \in \actions^\players \setminus \{\boldsymbol{0}\}} \tcorreq(\pureprof)
    \braces*{\sum_{\pl\in \players} \tent (\late{{0, \pureprof[-\pl]}}-\late{\pureprof})\tplty + \tent \act}
    \\
    0 & \le  \sum_{\pureprof \in \actions^\players \setminus \{\boldsymbol{0}\}} \tcorreq(\pureprof) \braces*{-\frac{1}{\nent}\tplty + \sum_{\pl \in \players} \tent \act}
    \\ 0 & \le \sum_{\purep \in \actions^\players \setminus \{0\}} \tcorreq(\pureprof) \braces*{-\frac{1}{\tnent}\tplty + \period \tnent}.   \end{split}
  \end{equation}
  
Under the assumptions of the theorem, \(\tplty > \nent^2 \period\), so 
\begin{equation}
\label{eq:1/k-again}
 -\frac{1}{\tnent}+\period \tnent < 0.        
\end{equation} 
Therefore, \eqref{eq:0-sum-tau} can hold only if \(\tcorreq(\pureprof) = 0\) for all \(\pureprof \in \actions^\players \setminus \{\boldsymbol{0}\}\), which implies that the support of \(\tcorreq\) is \(\boldsymbol{0}\).
\end{proof}

\subsection{Global queue}

The following theorem is a consequence of the previous results.

\begin{theorem}[Stability for Strategic Repetition]
    \label{thm:stratstab}
    Consider a \ac{DQM} with \(\players \leq \period\). 
    If \(\, \inf_{\ent[]} \tplty[\ent[]] > (\period +\players) ^2\period\), then any myopic \ac{CCE} is stable.
\end{theorem}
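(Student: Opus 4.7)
The plan is to establish, by induction on $\ti$, the pathwise invariant $\tnent \leq \period + \players$; stability in the sense of \cref{de:stability} then follows immediately with the uniform (deterministic, hence almost sure) bound $M = \period + \players$.

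For the base case, at $\ti = 0$ each player holds exactly one job, so $\tnent[0] = \players$, which is bounded by $\period + \players$ thanks to \cref{hyp:rep}. For the inductive step, assuming $\tnent \leq \period + \players$, I would split into two regimes. In the \emph{light} regime $\tnent \leq \period$, the number of late jobs at period $\ti$ is trivially at most $\tnent$ (it cannot exceed the total number of jobs present), so
\[
\tnent[\ti+1] = (\text{late jobs at }\ti) + \players \leq \tnent + \players \leq \period + \players.
\]
In the \emph{overflow} regime $\period < \tnent \leq \period + \players$, the penalty hypothesis gives
\[
\tplty \geq \inf_{\ent[]} \tplty[\ent[]] > (\period + \players)^2 \period \geq \tnent^2 \period,
\]
so \cref{thm:cce_more_players} applies and the myopic \ac{CCE} forces every player to play action $0$. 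Consequently all $\tnent$ jobs join the queue simultaneously at time $0$; exactly $\period$ of them are processed before the deadline and the remaining $\tnent - \period$ are late, which yields deterministically
\[
\tnent[\ti+1] = (\tnent - \period) + \players \leq 2\players \leq \period + \players,
\]
again using $\players \leq \period$ from \cref{hyp:rep}.

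I do not anticipate any real obstacle: the lower bound on the penalty in the statement is tailored exactly so that \cref{thm:cce_more_players} becomes applicable as soon as $\tnent$ crosses the threshold $\period$, while the ceiling $\period + \players$ is preserved by the ``collapse to $0$'' mechanism together with the capacity inequality $\players \leq \period$. The only subtlety worth flagging is that, in the overflow regime, once all players play $0$ the number of late jobs is \emph{deterministic} and equal to $\tnent - \period$; this is what allows the inductive bound to propagate pathwise rather than merely in expectation, making the almost sure statement of \cref{de:stability} trivial to read off.
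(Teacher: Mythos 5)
Your proposal is correct and follows essentially the same route as the paper's own proof: induction on $\ti$ with the invariant $\tnent \leq \period + \players$, splitting into the case $\tnent \leq \period$ (trivial bound on late jobs) and the case $\period < \tnent \leq \period + \players$, where the penalty bound activates \cref{thm:cce_more_players} and the collapse to the all-$0$ profile guarantees $\period$ jobs leave the system. The only cosmetic difference is that you bound $\tnent[\ti+1]$ by $2\players$ in the overflow regime while the paper bounds it by $\tnent$; both suffice.
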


\begin{corollary}[Stability for Strategic Repetition]
    \label{cor:stratstab}
    Consider a \ac{DQM} with \(\players \leq \period\). If \(\, \inf_{\ent[]} \tplty[\ent[]] > (\period +\players) ^2\period\), then any myopic \ac{NE} is stable.
\end{corollary}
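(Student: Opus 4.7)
The plan is to obtain this corollary as an immediate consequence of \cref{thm:stratstab}. The key observation is that the class of myopic \acl{NE} is contained in the class of myopic \aclp{CCE}: indeed, at each history with current state $\st$, a \acl{NE} $\mixedp$ of the stage game (interpreted as the product distribution $\bigotimes_{\pl \in \nplayers} \mixedp[\pl]$ over action profiles) is in particular a \acl{CCE} of that stage game, since the no-regret condition against constant deviations defining a \ac{CCE} is a strictly weaker requirement than unilateral best-response optimality. Hence any strategy profile that is a myopic \ac{NE} in the sense of \cref{de:strategic-profile} is also a myopic \ac{CCE} in the sense of \cref{de:strategic-profile-coarse}.

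From here, I would simply apply \cref{thm:stratstab} under the same hypotheses on $\players$, $\period$, and $\inf_{\ent[]} \tplty[\ent[]]$ to conclude that the induced process on $\tnent$ is almost surely bounded, which is exactly the stability condition of \cref{de:stability}. There is no additional work required beyond invoking the inclusion of solution concepts and citing the stronger theorem.

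The only subtle point --- and the one worth stating explicitly in the write-up --- is the embedding of a product distribution arising from a myopic \ac{NE} into the set of joint distributions over $\actions^{\players}$ on which \cref{de:strategic-profile-coarse} is defined. I do not expect any real obstacle here, since this embedding is standard and the stability argument in \cref{thm:stratstab} does not use any feature that distinguishes correlated from product distributions. The formal proof is therefore a one-line reduction.

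\begin{proof}
Every \acl{NE} is in particular a \acl{CCE}, since the \ac{CCE} inequalities only test deviations to constant actions while the \ac{NE} condition tests all unilateral deviations. Consequently, every myopic \ac{NE} in the sense of \cref{de:strategic-profile} is a myopic \ac{CCE} in the sense of \cref{de:strategic-profile-coarse}, and the conclusion follows from \cref{thm:stratstab}.
\end{proof}
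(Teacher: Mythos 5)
Your proposal is correct and coincides with the paper's own argument: the paper likewise notes that the corollary is an immediate consequence of \cref{thm:stratstab} because every Nash equilibrium is a coarse correlated equilibrium. No further comparison is needed.
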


Notice that in particular if the penalty cost is a constant \(\plty>(\period+\players)^{2}\period \), then any myopic \ac{CCE} is stable. The corollary is an immediate consequence of the theorem since any Nash equilibrium is also  a coarse correlated equilibrium.
 
As will be clear in the proof of \cref{thm:stratstab}, the queue alternates between two possible regimes. 
The first regime corresponds to the case \(\tnent \leq \period\). In this regime, stage equilibria have a non-trivial structure and in equilibrium some jobs may be late.
As a consequence, the number \(\tnent\) of jobs in the system may oscillate over time. 
When this number overcomes the level \(\period\), the system enters the other regime, corresponding to \(\tnent > \period\). In this regime the only stage equilibrium is the pure profile \(\boldsymbol{0}\). 
Therefore, if \(\players=\period\), the number \(\tnent\) of jobs in the system stays constant; 
if \(\players < \period\), the number of jobs decreases until the system goes back to the first regime. 

\begin{remark}
If the penalty costs are small, there may exist a myopic NE that is not stable. 
For example, if for all \( \ent[] \in \N,\ \tplty[\ent[]] < 1\), the cost of preempting the other players is too large compared to the potential gain: 
playing at the last stage of the period is strictly dominating for each player. 
The unique myopic NE is for every player to wait the last stage of the period and there are \(\tnent-1\) late jobs. 
Hence,  \(\tnent\) is almost-surely unbounded.
\end{remark}

\begin{proof}[Proof of \cref{thm:stratstab}]
    Assume that \(\players \leq \period\) and for all \(\nent\), \(\tplty[\ent[]] > (\period+\players)^2\period\). We show by induction that for every \(\ti\), \(\ent[] \leq \period+ \players\).
    
    The results is true at period~\(1\). 
    We now show that it is true at every period. 
    From period \(\ti\) to period \(\ti+1\), we need to consider two cases. 
    If \(\tnent \le \period\), then \(\tnent[\ti+1] \le \period + \players\) because at the next period \(\players\) new jobs arrive and at most \(\tnent\) are late. 
    If \(\period < \tnent \le \period + \players\), then
    \begin{equation}
        \label{eq:C-k^2T}
        \tplty > (\period+\players)^{2}\period \geq \tnent^{2} \period.   
    \end{equation}
    
    By \cref{thm:cce_more_players} there is a unique stage \ac{CCE}, where all players play~\(0\).
    This implies that at least \(\period\) jobs leave the system.
    Since \(\players \le \period\) and there are \(\players\) new jobs in the next period, we have 
    \(\tnent[\ti+1] \le \tnent \le \period + \players \).
\end{proof}

\section{Learning players}
\label{sec:learning}

In this section we study a model where every player independently uses a no-regret strategy. 
The no-regret property, originally introduced by \citet{hannanApproximationBayesRisk1957}, is a property of multiple algorithms used in online reinforcement learning~\citep[see, e.g.,][]{perchetApproachabilityRegretCalibration2014}. 
It specifies that in hindsight, the actions taken by a player are at least (asymptotically) as good as any constant action. Formally, in the one-player case, given a sequence of cost functions $\rcost$ indexed by time $\ti$ and a sequence of actions $\tact$, player $\pl$'s regret  is defined as
\begin{equation}\label{eq:regret}
  \regret_{\ti}^{\pl} = \max_{\pureb[\pl]\in \actions}\sum_{\trun=1}^{\ti}\rcost[\trun](\pureb[\pl]) - \rcost[\trun](\tact[\trun]).
\end{equation}

A strategy satisfies the \emph{no-regret} property if $\regret^{\pl}_\ti = o(\ti)$. Well-known no-regret strategies include regret-matching~\cite{hartSimpleAdaptiveStrategies2013}, stochastic fictitious play~\cite{fudenbergTheoryLearningGames1998}, and the \acf{EWA}~\cite{littlestoneWeightedMajorityAlgorithm1994, cesa-bianchiPredictionLearningGames2006}, which  we study below.

\paragraph{Exponential Weight Algorithm} 
In the following, we use the \acfi{EWA}\acused{EWA}, which is known to have no-regret guarantees when the payoff is bounded. 
Unfortunately, boundedness is not satisfied here, as the number of jobs in the system could grow to infinity, resulting in an
unbounded penalty. 
For this reason, we need to study the efficiency of the algorithm more closely.

In the context of repeated games, weights \glsadd{weight} are classically defined as follows:
\begin{equation}
\label{eq:weight}
  \weight = \exp\left(\sum_{\trun=1}^{\ti-1} - \smness \cost[{\pureb[\pl], \tact[\trun][-\pl]}]\right),
\end{equation}
where $\smness$ is a positive constant.\glsadd{smoothness}
\cref{eq:weight} can be rewritten in a recursive fashion as
\begin{equation}
  \weight[\ti+1] = \weight \exp\left(-\smness \cost[\pureb[\pl], \tact[\ti][-\pl]]\right).
\end{equation}
Then, the \ac{EWA} strategy specifies that action $\pureb[\pl]$ is chosen at time $\ti$  with probability
\begin{equation}
\label{eq:noregretstrat}  
\tstrat(\pureb[\pl]) = \frac{\weight}{\sum_{\purep[\pl] \in \actions} \weight[\ti][\purep[\pl]]}.
\end{equation}

\paragraph{\Acf{MLEWA}}

\ac{EWA} is not designed to handle a changing environment. 
Here, the number of jobs held by every player changes with time. 
Therefore, there we need to specify how such information is used. 
We design a new protocol called \ac{MLEWA} where each player uses several copies of \ac{EWA}. 
This protocol is indexed by a parameter \(\jobsrun\), which we call a \emph{level}.
 When the number of jobs that player \(\pl\) owns reaches a new level for the first time, this player starts a new \ac{EWA} where the weights are initialized as a function of the past. 
 When the number of jobs of player \(\pl\) equals a level that has already been visited, this player follows the recommendation given by \ac{EWA} for this level and updates the weights following \ac{EWA}.
Notice that at any given period \(\ti\) different players may use an \ac{EWA} at different levels.

Let $\firsttime$\glsadd{firsttime} be the first time  player \(\pl\) has \(\jobsrun\) jobs, with the convention that \(\firsttime=+\infty\) if player \(\pl\) never has \(\jobsrun\) jobs. 
Player \(\pl\)'s weights \(\ww\)  are now parameterized by two parameters: the period \(\ti\) and the level \(\jobsrun\).
The algorithm at level \(\jobsrun\) is defined for all \(\ti \geq \firsttime\) by induction. 
We start by describing the induction step, which is given by\glsadd{mlweight} 
\begin{equation}
\label{eq:onealgo}
\ww[t+1][\pureb[\pl]]=
\begin{cases} 
\ww[\ti][\pureb[\pl]] \exp\left(-\sm \ticost[{\pureb[\pl], \tact[\ti][-\pl]}] \right) & \text{ if }\tent=\jobsrun,\\
 \ww[\ti] & \text{ otherwise.} 
\end{cases}
\end{equation}
\cref{eq:onealgo} implies that \(\ww\) is updated if and only
if \(\tent=\jobsrun\). 

We now describe the initialization. 
At $\firsttime$, we start a new \ac{EWA} protocol and define initial weights
\begin{equation*}
    \ww[\firsttime][] := \ww[\firsttime][][\tent[\firsttime-1]],
\end{equation*}
that is, weights of a newly encountered state are defined as equal to the weights of the previously visited level.

\Cref{algo:mlewa} summarizes all the steps of our procedure. Its variables are not indexed by $\ti$ as there are a fixed number of variables and the algorithm does not access the whole history. Instead, it computes the new values at each step and updates the corresponding variables.

\Ac{MLEWA} is based on a no-regret algorithm adapted to changing states. \cref{eq:regret} does not deal with changing states. This is a limitation well identified by \citeauthor{GaiTar:EC2020}, where regret is computed assuming the state path is unchanged by a change of action.
The situation becomes much more complicated when states change endogenously.
What we can say about our algorithm is that it has the no-regret property state-by-state.

We can now state our main result about the stability of the system when players learn.

\begin{algorithm}[t]
	\SetAlgoNoLine
    $\forall \pureb[\pl] \in \actions, \text{ initialize } \ww[][\pureb[\pl]][1]$ \;
    $\forall i \in \players, \ent \gets 1$  \tcp*{level $1$}
    \For{each step $t \geq 1$}{
        \For{each player $\pl$} {
            select an action $\tact \sim \tststrat[\pl][][\tent[]]$
            \tcp*{proportional to $\ww[][][\ent]$}
        }
        \For{each player $\pl$} {
            $\forall \pureb[\pl] \in \actions, \ww[][\pureb[\pl]][\ent] \gets \ww[][\pureb[\pl]][\ent] \exp(-\smness \tcost[\pureb[\pl], \tact[\ti][-\pl]])$ \tcp*{number of late jobs + 1}
            $\ent \gets \latecount +1$\;
            \If{$\ww[][][\ent]$ is not defined}{
                $\forall \pureb[\pl] \in \actions, \text{ initialize }\ww[][\pureb[\pl]][\ent]$ \tcp*{level $\ent$}
            }
        }
    }
 \caption{\acf{MLEWA}}
 \label{algo:mlewa}
\end{algorithm}

\begin{theorem}[Stability of joint no-regret strategies in the subcritical case]
  \label{thm:onealgolearning}
  If \(\players < \period\) and \(\tplty[\ent[]] > 4\nent \period\) for all \(\tnent\), then
  strategy profiles where all players use \ac{MLEWA} 
  are
  stable.
\end{theorem}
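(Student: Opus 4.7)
The plan is to combine the level-wise no-regret property of \ac{MLEWA} with a Foster--Lyapunov-type drift argument on $\tnent$. The central deterministic fact is that if every player plays action $0$ in period $\ti$, then exactly $\min(\tnent,\period)$ jobs leave the queue and $\players$ new jobs arrive, so
\[
\tnent[\ti+1]=(\tnent-\period)^+ + \players.
\]
Since $\players<\period$, this map contracts towards $\players$ and keeps $\tnent$ bounded. The task is therefore to use no-regret to show that, once $\tnent$ is large, each player indeed plays close enough to $0$ for the system to inherit this contraction.

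Fix a player $\pl$ and a level $\jobsrun$, and let $T_\ti^{\pl,\jobsrun}=\{s\le\ti:\tent[s]=\jobsrun\}$. The copy of \ac{EWA} associated with $(\pl,\jobsrun)$ is updated exactly on $T_\ti^{\pl,\jobsrun}$, and the standard \ac{EWA} analysis compared to the fixed pure action $0$ gives
\[
\sum_{s\in T_\ti^{\pl,\jobsrun}}\bigl(\cost[\tmixedprof[s]][\pl]-\cost[0,\tmixedprof[s]^{-\pl}][\pl]\bigr)\;\le\;\frac{\log\abs{\actions}}{\smness}+\frac{\smness}{2}\sum_{s\in T_\ti^{\pl,\jobsrun}}\max_{b\in\actions}\cost[b,\tmixedprof[s]^{-\pl}][\pl]^2.
\]
Using \eqref{eq:cost-static-2}, the pointwise gap splits into a waiting ``gain'' bounded by $\jobsrun(\period-1)$ and a penalty ``loss'' of $\jobsrun\tplty[\tnent[s]]\bigl(\late[\pl]{\tmixedprof[s]}-\late[\pl]{0,\tmixedprof[s]^{-\pl}}\bigr)$. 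Under the hypothesis $\tplty[\tnent]>4\tnent\period$, any non-negligible excess lateness strictly dominates the waiting gain, so the regret inequality forces the cumulative expected excess lateness at level $\jobsrun$ to be sublinear in $\abs{T_\ti^{\pl,\jobsrun}}$.

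Summing these level-wise lateness bounds over players and combining with the deterministic contraction in the first paragraph, one obtains a drift estimate of the form $\mathbb E[\tnent[\ti+1]\mid\hist]\le \tnent-\varepsilon$ whenever $\tnent$ exceeds a threshold, from which a standard Foster--Lyapunov (or direct supermartingale) argument gives $\sup_\ti\tnent<\infty$ almost surely.

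The main obstacle is the self-referential nature of the cost range entering the regret bound: the term $\max_b\cost[b,\cdot][\pl]^2$ scales like $\jobsrun^2\tplty[\tnent[s]]^2$, and $\tnent[s]$ is exactly the quantity one wishes to bound, so the regret and drift bounds are coupled. Closing the loop will require a bootstrap step---first use $\tplty>4\tnent\period$ to obtain a crude, level-dependent a priori bound on $\tnent$ on finite horizons, then refine it to an almost-sure uniform bound via martingale concentration---which is presumably where the specific constant $4$ in the assumption comes from. A secondary technical point is the warm-start initialisation of \ac{EWA}: when a player first enters level $\jobsrun$, the weights are inherited from level $\jobsrun-1$; since the standard \ac{EWA} regret is only logarithmically sensitive to initial weights this should not affect the conclusion qualitatively, but one must verify that the inherited weights are not concentrated on actions that would be suboptimal at the new level.
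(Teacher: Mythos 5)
There is a genuine gap, and it is located exactly where you flagged your ``main obstacle''---but the problem is deeper than a missing bootstrap. First, the \ac{EWA} regret bound you invoke uses a \emph{fixed} learning rate $\smness$ (as in the paper), so the second-order term $\frac{\smness}{2}\sum_s \max_b \cost[b,\cdot][\pl]^2$ is \emph{linear} in $\abs{T_\ti^{\pl,\jobsrun}}$ with a coefficient of order $\jobsrun^2\tplty[{\tnent[s]}]^2$; the regret inequality therefore does not force the cumulative excess lateness to be sublinear, and the proposed drift estimate does not follow. Second, and more fundamentally, even if you obtained a negative drift $\E[\tnent[\ti+1]\mid\hist]\le\tnent-\varepsilon$ above a threshold, a Foster--Lyapunov argument yields positive recurrence (stability in distribution), not the almost-sure boundedness $\proba[\exists M,\ \forall\ti,\ \tnent\le M]=1$ demanded by \cref{de:stability}. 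Under \ac{EWA} every action retains positive probability at every period, so a positive recurrent chain of this kind would still visit arbitrarily high states infinitely often and $\sup_\ti\tnent$ would be infinite almost surely. Some mechanism must make the \emph{total} probability of ever climbing past level $z$ summable in $z$, and a time-homogeneous drift bound cannot deliver that.

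The paper closes this gap without ever using the no-regret inequality. Instead it exploits strict dominance directly: when $\tent>2\period^2$ and $\tplty>4\tnent\period$, action $0$ beats every other action by at least $\tent$ (\cref{lem:domlargek}), so each visit to level $\jobsrun$ multiplies the weight ratio of $0$ against any other action by $\exp(\smness\jobsrun)$. Hence the probability of \emph{not} playing $0$ on the $m$-th visit to a level decays geometrically in $m$ (\cref{lem:pref0gen}, with the warm-start handled by \cref{lem:initbound}), giving $\sum_m r(z,m)=O(1/z)$. A reinforced-random-walk lemma (\cref{lem:reinfwalk3-main}) then turns this summability into a Borel--Cantelli-type product bound: conditionally on reaching a high window, the walk stays below it with probability bounded away from zero uniformly in the level, which forces $\sup_\ti\tnent<\infty$ almost surely. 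This reinforcement-in-the-number-of-visits, rather than a one-step drift, is the idea your plan is missing; your observation that the constant $4$ must come from balancing the waiting gain against the penalty is correct, but it enters through the dominance gap of \cref{lem:domlargek}, not through a regret bound.
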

  
Several lemmas are needed to prove \cref{thm:onealgolearning}. 
First, we show that (in a static context) action $0$ strictly dominates any other actions for a player who holds a large enough number of jobs.
The implication of this dominance in our dynamic model is that that the weight on action $0$ converges towards $1$ when enough jobs are held by a player. 
Finally, we expose some results on reinforced random walks.

\subsection{Domination by action \texorpdfstring{$0$}{0}}
The following lemma shows that in the static case action $0$ is strictly dominant  for a player $\pl$ who has enough jobs. 

\begin{lemma}
[Strict Domination by \(0\) when \(\tent > 2\period^2\) in the static model]
  \label{lem:domlargek}
  If \(\tent > 2\period^{2}\),
  \(\tplty > 4 \tnent \period\), and \(\pureprof\) is an action profile such that $\purep[\pl] \neq 0$, then
  \(\tcost[0, \pureprof[-\pl]] < \tcost[\pureprof] - \tent\).
\end{lemma}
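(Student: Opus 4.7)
Using the cost formula in \eqref{eq:cost-static-2}, the desired inequality $\tcost[0,\pureprof[-\pl]]<\tcost[\pureprof]-\tent$ rearranges, after dividing by $\tent$, to
\[
\tplty\bigl(\late[\pl]{\pureprof}-\late[\pl]{0,\pureprof[-\pl]}\bigr)>\purep[\pl]+1.
\]
So the lemma reduces to lower-bounding the \emph{gain} in $\pl$'s late probability when $\pl$ switches from action $0$ to $a\coloneqq\purep[\pl]\in\{1,\dots,\period-1\}$, given that $\pl$ owns more than $2\period^2$ jobs.

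The next step is to make both late probabilities explicit from the FIFO-with-uniform-tie-breaking queue dynamics. Write $m_t=\sum_{\plb\neq\pl\colon\purep[\plb]=t}\ent[\plb]$ and $L_a=\sum_{t<a}m_t$. Since $\tent>2\period^2>\period$, when $\pl$ plays $0$ the $\tent+m_0>\period$ time-$0$ jobs saturate all $\period$ slots and symmetry gives $\late[\pl]{0,\pureprof[-\pl]}=1-\period/(\tent+m_0)$. When $\pl$ plays $a$, FIFO first clears the $L_a$ earlier jobs (wasting $\max(0,a-L_a)$ slots) and then assigns the remaining $(\period-\max(a,L_a))^+$ slots uniformly among the $\tent+m_a>\period$ time-$a$ jobs, giving $\late[\pl]{\pureprof}=1-(\period-\max(a,L_a))^+/(\tent+m_a)$.

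Set $\Delta\coloneqq\late[\pl]{\pureprof}-\late[\pl]{0,\pureprof[-\pl]}$. I would identify the worst case of $\pureprof[-\pl]$ by observing that $\Delta$ is non-decreasing in $m_a$ and non-increasing in $m_0$ at fixed $L_a$, so its minimum is attained at $m_a=0$ and $m_0=L_a$. Writing $\mu\coloneqq L_a$, a one-variable analysis of
\[
\Delta(\mu)=\frac{\period}{\tent+\mu}-\frac{(\period-\max(a,\mu))^+}{\tent}
\]
---decreasing on $[0,a]$, increasing on $[a,\period]$, decreasing on $[\period,\infty)$ toward the plateau $\period/\tnent$---shows the minimum over $\mu\in[0,\tnent-\tent]$ is one of $\Delta(a)=a(\tent+a-\period)/[\tent(\tent+a)]$, $\Delta(\tnent-\tent)=\period/\tnent-(\period-a)/\tent$, or $\period/\tnent$.

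Finally I would plug in the two hypotheses. The bound $\tent>2\period^2$ implies $\tent-\period>\tent/2$ and $\tent+a<2\tent$, so $\Delta(a)\ge(\tent/2)/(2\tent^2)=1/(4\tent)$, and analogous estimates handle the other two candidates. Multiplying by $\tplty>4\tnent\period$ and using $\tnent\ge\tent$ (with strictness coming from $\nplayers\ge 3$, so at least two other jobs exist) then yields $\tplty\Delta>\tnent\period/\tent\ge\period\ge a+1$ in every branch. The main obstacle is the queue-dynamics bookkeeping underlying the explicit late-probability formulas---especially the wasted-slot accounting when $L_a<a$---and the verification that concentrating all of the other players' pre-$a$ mass at time $0$ and placing nothing at time $a$ is indeed the adversarial configuration; once this is done, the remaining algebra, while tedious, is a routine combination of the three hypotheses.
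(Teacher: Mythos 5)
Your overall strategy is sound and, after one repair, the argument goes through; it is best described as a finer-grained version of the paper's own proof rather than a different idea. Both arguments amount to lower-bounding $\tplty\bigl(\late{\pureprof}-\late{0,\pureprof[-\pl]}\bigr)$ by $\purep[\pl]+1$, and both exploit the same two facts: when $\pl$ moves to $0$ it competes with $\tent+\tkz>\period$ time-zero jobs and each of its jobs is served with probability exactly $\period/(\tent+\tkz)$, while at $\myb\neq 0$ only the service slots not consumed by earlier arrivals remain available. The paper gets away with a much coarser case split ($\tkz\ge\period$ versus $\tkz<\period$, using in the second case only that at most $\period-\myb$ of $\pl$'s jobs can exit on time), whereas you compute the late probabilities explicitly and then optimize over the opponents' configuration. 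Your route yields sharper constants but costs you the worst-case bookkeeping that the paper deliberately avoids.

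The one genuine flaw is that your displayed formula $\late{\pureprof}=1-(\period-\max(a,L_a))^+/(\tent+m_a)$ is not an equality: the number of service slots left for the time-$a$ jobs depends on the temporal placement of the $L_a$ earlier jobs, not only on their count. For instance, with $a=3$ and two opposing jobs arriving at time $2$, the server idles at times $0$ and $1$, the earlier jobs exit at $3$ and $4$, and only $\period-4$ of $\pl$'s jobs can be on time, whereas your formula gives $\period-3$. Fortunately the error is one-sided: the true number of on-time slots is always at most $(\period-\max(a,L_a))^+$, so your formula underestimates $\late{\pureprof}$ and hence underestimates the gain $\Delta$, which is exactly the safe direction for the inequality you need. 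You should therefore state it as an inequality (an upper bound on the on-time slots, hence a lower bound on $\late{\pureprof}$) and observe that minimizing the resulting lower bound for $\Delta$ over configurations still bounds the true $\Delta$ from below. With that change, your monotonicity reduction to the one-variable function $\Delta(\mu)$, the three candidate evaluations, and the final estimates ($\tent-\period>\tent/2$, $\tent+a<2\tent$, $\tplty>4\tnent\period$, $a+1\le\period$) do close the argument.
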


\begin{proof}
  
  Let \(\pureprof\) be a pure action profile such that $\purep[\pl] \neq 0$. 
  Call
  \begin{equation}
  \label{eq:k-zero}  
    \tkz = \sum_{\plb \neq \pl} \tent[\plb] 1_{\purep[\plb] = 0} 
  \end{equation}
  the number of jobs that join the queue at \(0\) excluding player's \(\pl\) jobs.
 
  Then:
  \begin{equation*}
    \costz = \tent \braces*{\period + \frac{\tent + \tkz - \period}{\tent + \tkz}\tplty} 
    = \tent\braces*{\period+ \braces*{1-\frac{\period}{\tent+\tkz}} \tplty}.
  \end{equation*}
  
  For each job, \(\pl\) incurs the waiting cost \(\period\) and an additional cost due to the probability of being late. 
  At period $0$, \(\tent+\tkz > \period\)
  jobs join the queue; therefore some jobs will surely be late. 
  Moreover, the probability that a job does not incur the penalty is equal to the probability that this job joins the queue among the first \(\period\) jobs, which happens with probability \(\period/(\tent+\tkz)\).
 
  \begin{itemize}
  \item
    If \(\tkz \geq \period\), then under the profile $\pureprof$ the queue is full from stage $0$ and the jobs sent by $\pl$ are all late, \ie \(\costb = \tent(\period - \myb) + \tent \tplty\). 
    Then
    \begin{equation*}
        \costz - \costb = -\frac{\tent \period}{\tent+\tkz}\tplty + \tent \myb.        
    \end{equation*}
    The assumption on \(\tplty\) implies that
    \(\tplty > \tnent (\period+1)/\period\), so:
    \begin{equation*}
        \costz - \costb \leq - \frac{\tent}{\tent + \tkz} \tnent (\period+1) + \tent \myb.
    \end{equation*}
    Since, by definition, \(\tnent \geq \tent + \tkz\), it follows that:
    \[
        \costz - \costb \leq - \tent (\period+1) + \tent \myb \leq \tent (\myb - \period) -\tent \leq -\tent.
    \]
  \item
    Consider now the case \(\kz < \period\). 
    Player \(\pl\) pays the waiting cost \(\period-\myb\) for each job; 
    at most \(\period - \myb\) of player \(\pl\)'s jobs can leave the system without being late.
    Consequently the following bound holds:
    \[
        \costb \geq \tent (\period - \myb) + (\tent - \period + \myb)\tplty.
    \]
    Then
    \begin{equation}
      \begin{aligned}
        \costz - \costb 
        & = \tent\period+ \tent\tplty-\frac{\period\tent}{\tent+\tkz}\tplty \\
        &\quad-\tent \period+\tent\myb - (\tent - \period + \myb)\tplty,\\
        & \leq -\frac{\tent \period}{\tent + \tkz} \tplty + \tent \myb + \period \tplty - \myb \tplty,\\
        & = \parens*{\frac{\tkz \period}{\tent+\tkz} - \myb}\tplty + \tent \myb.
      \end{aligned}
      \label{eq:k0T}
    \end{equation}
    Since \(\tent > 2\period^2\), it follows that \(\tent + \tkz \geq 2\period^2\), so
    \[
    \frac{\tkz \period}{\tent + \tkz} \leq \frac{\period^{2}}{2\period^{2}} < \frac{1}{2}.
    \]
    Hence,
    \[
    \costz - \costb \leq \parens*{\frac{1}{2}-\myb}\tplty + \ent \myb < - \frac{\tplty}{2} + \tent \period,
    \]
    because \(1\leq \myb\leq \period\).

    The assumption that \(\tplty > 4 \nent \period\) implies
    \[
     \costz - \costb <  - 2 \tnent \period + \tent \period < -\tnent \period<-\tnent<-\tent.
     \qedhere
     \]
  \end{itemize}
\end{proof}

\subsection{Action \texorpdfstring{$0$}{0} is increasingly preferred when the number of jobs grows}

We now use \cref{lem:domlargek} to show that, when the number of jobs in the system is high enough, for strategies \(\vx\)  defined as in \cref{eq:noregretstrat},  the weight \(\ww[\ti][0]\) of action \(0\) increases faster than the other weights. 
Since  $\vx$ is  proportional to this weight,  players are more and more prone to play action \(0\) when the state is visited again.

Call $\countn$\glsadd{countn} the number of times that the level of player $\pl$ is $\countvar$, up to time $\ti-1$:
\begin{equation}
  \countn = \#\braces*{\tent[\trun] = \countvar \mid \trun \in \braces*{0, \dots, \ti-1}}.
\end{equation}
We have
\begin{lemma}[Preference for \(0\)]\label{lem:pref0gen}
  If \(\countvar > 2\period ^2\) and
  \(\tplty > 4\nent \period\), then for all \(\ti \geq \firsttime\),
\begin{equation*}
\vx(0) \geq \frac{\tststrat[\pl][\firsttime](0)}{\tststrat[\pl][\firsttime](0)+\left(1-\tststrat[\pl][\firsttime](0)\right) \exp\left(-\smness\countvar\countn\right)}.
\end{equation*}
\end{lemma}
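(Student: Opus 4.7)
My plan is to track, for each action $\pureb[\pl]\neq 0$, the evolution of the ratio $\ww[\trun][\pureb[\pl]][\countvar]/\ww[\trun][0][\countvar]$ as $\trun$ ranges from $\firsttime$ to $\ti$, and then translate the resulting bound into a lower bound on $\vx(0)$ via the probability formula \cref{eq:noregretstrat}.

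The first observation is that, by the update rule \cref{eq:onealgo}, the weights at level $\countvar$ change only at steps $\trun$ for which $\tent[\trun]=\countvar$, and are frozen at all other steps. At such an update, both $\ww[\trun+1][0][\countvar]$ and $\ww[\trun+1][\pureb[\pl]][\countvar]$ are rescaled by exponential factors, so the ratio is multiplied by $\exp\bigl(-\smness(\ticost[\pureb[\pl],\tact[\trun][-\pl]][\pl][\countvar]-\ticost[0,\tact[\trun][-\pl]][\pl][\countvar])\bigr)$.

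Next I would invoke \cref{lem:domlargek}. Since $\countvar>2\period^2$ and $\tplty>4\nent\period$, at any update step (where the player holds exactly $\countvar$ jobs, so $\tent$ in that lemma equals $\countvar$) the cost gap $\ticost[\pureb[\pl],\tact[\trun][-\pl]][\pl][\countvar]-\ticost[0,\tact[\trun][-\pl]][\pl][\countvar]$ strictly exceeds $\countvar$ for every $\pureb[\pl]\neq 0$ and every opponent profile. Each of the $\countn$ updates occurring strictly before $\ti$ therefore shrinks the ratio by at least a factor $\exp(-\smness\countvar)$; iterating yields
\begin{equation*}
\frac{\ww[\ti][\pureb[\pl]][\countvar]}{\ww[\ti][0][\countvar]} \le \frac{\ww[\firsttime][\pureb[\pl]][\countvar]}{\ww[\firsttime][0][\countvar]}\exp(-\smness\countvar\countn).
\end{equation*}

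To conclude, I would rewrite $\vx(0)=1/\bigl(1+\sum_{\pureb[\pl]\neq 0}\ww[\ti][\pureb[\pl]][\countvar]/\ww[\ti][0][\countvar]\bigr)$ from \cref{eq:noregretstrat}, plug in the ratio bound above, and use the normalization identity $\sum_{\pureb[\pl]\neq 0}\ww[\firsttime][\pureb[\pl]][\countvar]/\ww[\firsttime][0][\countvar]=(1-\tststrat[\pl][\firsttime](0))/\tststrat[\pl][\firsttime](0)$ coming from the same formula applied at time $\firsttime$. A short algebraic rearrangement then produces the stated inequality. There is no real obstacle: the lemma is essentially a bookkeeping corollary of \cref{lem:domlargek}, and the only subtlety is the uniform applicability of that lemma at every update step, which is immediate from the blanket hypotheses on $\countvar$ and $\tplty$.
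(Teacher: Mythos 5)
Your proposal is correct and follows essentially the same route as the paper's proof: both arguments iterate the multiplicative update of \cref{eq:onealgo} over the $\countn$ steps at which the level is active, invoke \cref{lem:domlargek} to bound each cost gap below by $\countvar$, and then convert the resulting weight-ratio bound into the stated probability bound via \cref{eq:noregretstrat}. The only (immaterial) difference is that you track the ratio $\ww[\ti][\pureb[\pl]]/\ww[\ti][0]$ directly while the paper phrases the induction in terms of the equivalent quantity $\bigl(\ww[\ti][\pureb[\pl]]/\ww[\firsttime][\pureb[\pl]]\bigr)\big/\bigl(\ww[\ti][0]/\ww[\firsttime][0]\bigr)$.
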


\begin{proof}

  We first prove by induction on $\ti$ that for every \(\pureb[\pl] \neq 0\), we get:
  \begin{equation}
  \label{pr:eq7}
    \frac{\ww}{\ww[\firsttime]} \leq \exp\left(-\sm\countvar\countn\right) \frac{\ww[\ti][0]}{\ww[\firsttime][0]},
  \end{equation}
  
  If $\ti = \firsttime$ then by definition player $\pl$ never had $\countvar$ jobs before and $\countn = 0$.

  It follows that both sides are equal to $1$ and the result is true.
  
  \newcommand{\countm}{\countn[\countvar][\ti-1]}
  
  We now show that, if the result holds for \(\ti-1\), then it holds for \(\ti\). 
  There are two cases.

  If $\tent[\ti-1] \neq \countvar$, then all weights are equal at stage $\ti$ and $\ti-1$, i.e. $\ww[\ti]=\ww[\ti-1]$ for all action $\pureb[\pl]$. Moreover $\countn[\countvar][\ti]=\countm$, so the inequality is the same at $\ti$ and $\ti-1$ and therefore true. If $\tent[\ti-1]= \countvar$, then $ \countn=\countm+1$.
  By the recurrence hypothesis, we know that
  \begin{equation}
  \label{pr:eq8}
    \frac{\ww[\ti-1]}{\ww[\firsttime]} \leq \exp\left(-\sm\countvar\countm\right) \frac{\ww[\ti-1][0]}{\ww[\firsttime][0]}.
  \end{equation}
  Using \cref{lem:domlargek,eq:onealgo}, for \(\pureb[\pl] \neq 0\), we get
  \begin{align}
  \label{pr:eq9}
    \frac{\ww}{\ww[\ti-1]} & =\exp\left(-\sm \ticost[{\pureb[\pl], \tact[\ti-1][-\pl]}][\pl][{\vent}] \right) \\
    & \leq \exp\left(-\sm\countvar\right)\exp\left(-\sm \ticost[{0, \tact[\ti-1][-\pl]}][\pl][{\vent}] \right),\\
    & = \exp(-\sm\countvar ) \frac{\ww[\ti][0]}{\ww[\ti-1][0]},   \label{pr:eq10}
  \end{align}
  Multiplying  \cref{pr:eq8,pr:eq10}, we obtain the result for \(\ti\).
 
  Fix now \(\ti\). \cref{pr:eq7} implies that
  \begin{equation*}
    \ww \leq \exp\left(-\sm\countvar\countn\right) \frac{\ww[\ti][0]}{\ww[\firsttime][0]}\ww[\firsttime].
  \end{equation*}

  Therefore,

  \begin{equation*}
    \begin{aligned}
    \tststrat(0) & = \frac{\ww[\ti][0]}{\sum_{\pureb[\pl] \in \actions} \ww[\ti][\pureb[\pl]] }
    \\ & = \frac{\ww[\ti][0]}{\ww[\ti][0]+\sum_{\pureb[\pl]\neq 0} \ww[\ti]}
    \\ & \geq \frac{\ww[\ti][0]}{\ww[\ti][0]+\sum_{\pureb[\pl]\neq 0}  \exp\left(-\sm\countvar\countn\right) \frac{\ww[\ti][0]}{\ww[\firsttime][0]}\ww[\firsttime]}
    \\ & = \frac{\ww[\firsttime][0]}{\ww[\firsttime][0]+\sum_{\pureb[\pl]\neq 0}  \exp\left(-\sm\countvar\countn\right)\ww[\firsttime]} 
    \\ & = \frac{\tststrat[\pl][\firsttime](0)}{\tststrat[\pl][\firsttime](0)+\sum_{\pureb[\pl]\neq 0}  \exp\left(-\sm\countvar\countn\right)\tststrat[\pl][\firsttime](\pureb[\pl])} 
    \\ & =\frac{\tststrat[\pl][\firsttime](0)}{\tststrat[\pl][\firsttime](0)+\left(1-\tststrat[\pl][\firsttime](0)\right) \exp\left(-\smness\countvar\countn\right)},
    \end{aligned}
  \end{equation*}
  which proves the lemma.
\end{proof}

\ifmultischeme

In the following, we suppose that the initialization scheme of \ac{MLEWA} satisfies the following hypothesis: the weight attributed to time $0$ when initializing is bounded away from $0$ uniformly in $n$.

\begin{assumption}\label{hyp:scheme}
    \begin{equation*}
     \inf_{\countvar \text{ s.t. } \firsttime <+\infty}\ \initializationnormalized(0)>0
    \end{equation*}
\end{assumption}

Scheme~\ref{it:scheme-1} always satisfies \cref{hyp:scheme} because the initialization weights are constant. The following lemma states that Scheme~\ref{it:scheme-2} also satisfies \ref{hyp:scheme}.

\begin{lemma}
  Scheme~\ref{it:scheme-2} satisfies  \cref{hyp:scheme}.
\end{lemma}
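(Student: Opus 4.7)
The plan is to reduce the claim to a chain of monotonicity across levels, exploiting the fact that the level of player $\pl$ increases by at most one per period. The first step is the key structural observation: since $\tent[\nti]$ equals the number of late jobs at time $\ti$ plus the single new arrival, and late jobs never exceed $\tent$, one has $\tent[\nti] \leq \tent + 1$. Consequently, if $\firsttime < +\infty$ for some $\countvar \geq 2$, the level at time $\firsttime - 1$ is forced to be exactly $\countvar - 1$: the bound above requires $\tent[\firsttime - 1] \geq \countvar - 1$, while $\tent[\firsttime - 1] \geq \countvar$ would imply that level $\countvar$ has been visited at some time $\leq \firsttime - 1 < \firsttime$, contradicting the minimality of $\firsttime$.

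Writing $c_\countvar \coloneqq \initializationnormalized(0)$, this structural step combined with Scheme~\ref{it:scheme-2} yields $c_\countvar = \tststrat[\pl][\firsttime][\countvar - 1](0)$ for every $\countvar \geq 2$ with $\firsttime < +\infty$. The main tool is then \cref{lem:pref0gen}, which applies at level $\countvar - 1$ as soon as $\countvar - 1 > 2\period^{2}$, by the standing penalty bound $\tplty[\tnent] > 4\tnent\period$. Applied at the time $\ti = \firsttime$ (which satisfies $\firsttime \geq \firsttime[\pl][\countvar - 1]$, again by the increment-by-one property), that lemma will give
\[
c_\countvar = \tststrat[\pl][\firsttime][\countvar - 1](0) \geq \frac{c_{\countvar - 1}}{c_{\countvar - 1} + (1 - c_{\countvar - 1})\exp\parens*{-\smness(\countvar - 1)\countn[\countvar - 1][\firsttime]}} \geq c_{\countvar - 1},
\]
so the sequence $(c_\countvar)$ is non-decreasing for $\countvar \geq 2\period^{2} + 2$.

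To conclude I would set $N \coloneqq 2\period^{2} + 2$. Since the weights of \ac{MLEWA} are finite products of positive exponentials, $c_\countvar > 0$ whenever it is defined; in particular, the finite set $\{\countvar < N : \firsttime < +\infty\}$ (which contains $\countvar = 1$) admits a strictly positive minimum $\alpha$. For any reachable $\countvar \geq N$, iterating the monotonicity just established down to $c_{N - 1}$ gives $c_\countvar \geq c_{N - 1} \geq \alpha$, so $\inf_\countvar c_\countvar \geq \alpha > 0$, which is \cref{hyp:scheme}.

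The main obstacle I anticipate is the structural step isolating $\tent[\firsttime - 1] = \countvar - 1$: it is the only place where the detailed dynamics of the queue enter, and it is what makes Scheme~\ref{it:scheme-2} inherit weights from a level where \cref{lem:pref0gen} is already available. Everything else then reduces to monotonicity above the threshold $2\period^{2} + 2$ together with a finite-minimum argument at the (possibly irregular) low levels; no uniform lower bound is required on the weights that \ac{EWA} may produce at those low levels.
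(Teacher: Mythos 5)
Your argument is correct and follows essentially the same route as the paper: both exploit the copy-initialization (together with the fact that the level can only increase by one per period, so the copied level is exactly $\countvar-1$) and \cref{lem:pref0gen} to propagate a positive lower bound on the weight of action $0$ across successive levels above $2\period^{2}$, disposing of the finitely many low levels by taking a positive minimum. The only difference is that the paper's version of this induction (\cref{lem:initbound}) extracts the stronger quantitative bound $\initializationnormalized(0)\geq 1/(1+B^{\pl}\exp(-\smness\countvar))$, which is what is actually invoked later in the proof of \cref{thm:onealgolearning}, whereas your monotonicity $c_{\countvar}\geq c_{\countvar-1}$ suffices only for the uniform positivity asserted in the statement itself.
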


\fi
\begin{lemma}\label{lem:initbound}
There exists $B^\pl > 0$ such that \begin{equation}\initializationnormalized(0)\geq \frac{1}{1+B^\pl\exp(-\smness\countvar)}.\end{equation}
\end{lemma}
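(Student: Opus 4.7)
My plan is to prove the inequality by strong induction on the order $m$ in which player $\pl$ first visits distinct levels. Enumerate these as $\countvar_0 = 1, \countvar_1, \countvar_2, \ldots$ with $\firsttime[\pl][\countvar_m]$ strictly increasing in $m$. The key structural fact is that one new job arrives per period and that the number of late jobs never exceeds the number of jobs held, so the level can grow by at most one per period; hence the \emph{parent} level $\countvar'_m := \tent[\firsttime[\pl][\countvar_m] - 1]$ satisfies $\countvar'_m \geq \countvar_m - 1$. Combined with $\countvar'_m \neq \countvar_m$ (since $\countvar_m$ is first visited at $\firsttime[\pl][\countvar_m]$), this yields the decisive inequality $2\countvar'_m \geq \countvar_m$ whenever $\countvar_m \geq 2$. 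Moreover $\countvar'_m$ equals $\countvar_{m'}$ for some $m' < m$, which makes the induction well-founded.

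For the base cases, I would handle the finitely many levels $\countvar \leq 2\period^2 + 1$ that $\pl$ ever visits. Since the seed weights of \ac{MLEWA} are strictly positive and every update multiplies weights by a positive exponential, $\tststrat[\pl][\firsttime[\pl][\countvar]][\countvar](0) > 0$ almost surely for each such $\countvar$. Setting
\begin{equation*}
B_{\text{small}} := \sup\Bigl\{ \tfrac{1 - \tststrat[\pl][\firsttime[\pl][\countvar]][\countvar](0)}{\tststrat[\pl][\firsttime[\pl][\countvar]][\countvar](0)} \exp(\smness \countvar) : \countvar \leq 2\period^2 + 1,\ \firsttime[\pl][\countvar] < \infty \Bigr\}
\end{equation*}
gives an almost surely finite quantity (supremum over finitely many terms), and any $B^\pl \geq B_{\text{small}}$ makes the desired inequality hold on every small level by direct rearrangement.

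For the inductive step, I take $\countvar_m > 2\period^2 + 1$, so $\countvar'_m \geq \countvar_m - 1 > 2\period^2$ and \cref{lem:pref0gen} applies at level $\countvar'_m$. The initialization rule gives $\tststrat[\pl][\firsttime[\pl][\countvar_m]][\countvar_m](0) = \tststrat[\pl][\firsttime[\pl][\countvar_m]][\countvar'_m](0)$, and $\countn[\countvar'_m][\firsttime[\pl][\countvar_m]] \geq 1$ because level $\countvar'_m$ is visited at time $\firsttime[\pl][\countvar_m] - 1$. Thus \cref{lem:pref0gen} yields
\begin{equation*}
\tststrat[\pl][\firsttime[\pl][\countvar_m]][\countvar_m](0) \geq \frac{p}{p + (1-p)\exp(-\smness \countvar'_m)}, \qquad p := \tststrat[\pl][\firsttime[\pl][\countvar'_m]][\countvar'_m](0).
\end{equation*}
By the induction hypothesis applied at $\countvar'_m = \countvar_{m'}$ with $m' < m$, we have $p \geq 1/(1 + B^\pl \exp(-\smness \countvar'_m))$, equivalently $(1-p)/p \leq B^\pl \exp(-\smness \countvar'_m)$. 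Substituting and using $2\countvar'_m \geq \countvar_m$,
\begin{equation*}
\tststrat[\pl][\firsttime[\pl][\countvar_m]][\countvar_m](0) \geq \frac{1}{1 + B^\pl \exp(-2\smness \countvar'_m)} \geq \frac{1}{1 + B^\pl \exp(-\smness \countvar_m)},
\end{equation*}
which closes the induction.

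The hard part will be the base case: for $\countvar \leq 2\period^2$ \cref{lem:domlargek} fails, so no deterministic positive lower bound on $\tststrat[\pl][\firsttime[\pl][\countvar]][\countvar](0)$ independent of the realization is available. The remedy is to exploit that only finitely many such levels exist and let $B^\pl$ absorb the corresponding (random but almost surely finite) ratios; the ``doubling'' inequality $2\countvar'_m \geq \countvar_m$ is then precisely what ensures the same constant $B^\pl$ propagates unchanged through every larger level via \cref{lem:pref0gen}.
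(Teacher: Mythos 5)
Your proof is correct and follows essentially the same route as the paper's: a base case over the finitely many small levels (using that \ac{MLEWA} keeps all weights strictly positive, so the finitely many ratios are a.s.\ finite), followed by an induction that applies \cref{lem:pref0gen} at the parent level with $\countn \geq 1$ and closes via the doubling inequality $2\countvar' \geq \countvar$. Your bookkeeping is in fact slightly tighter than the paper's — you make the parent-level structure and the well-foundedness of the induction explicit, and by placing $\countvar \leq 2\period^2+1$ in the base case you avoid the paper's off-by-one at the boundary (its induction step invokes \cref{lem:pref0gen} at level $\countvar = 2\period^2$, where that lemma's hypothesis $\countvar > 2\period^2$ just fails).
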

 \begin{proof} 
  Let \(\pl\in \players\). We can define
  \[
  Z^\pl \coloneqq \max_{\countvar \leq 2\period^{2}\text{ s.t. } \firsttime <+\infty} \frac{1}{\initializationnormalized(0)}-1 .
  \]
  
  For $\countvar=1$, we initialize the algorithm uniformly so every action has initially a strictly positive weight. 
  By definition of \ac{EWA}, if an action has a strictly positive weight during the initialization then it is always played with strictly positive probability. 
  When reaching the level $\countvar=2$, the initialization is done by copying the current distribution of the algorithm of level $\countvar=1$, hence each action has  a strictly positive weight too. 
  Induction proves that at every stage and for every level, the probability to play every action is strictly positive---and  strictly lower than \(1\).
  It follows that $Z^\pl$ is strictly positive  as the minimum of finitely many strictly positive numbers.

  By definition, for every $\jobsrun \leq 2 \period^{2}$ such that $\firsttime <+\infty$, one has 
  \[
  \initializationnormalized \ge \frac{1}{1+Z^\pl} \ge \frac{1}{1+Z^\pl\exp(\smness 2\period^2)\exp(-\smness \countvar)},
  \]
  so let $B^\pl \coloneqq Z^\pl \exp(\smness 2\period^2)$.

  We now prove that this is true also for $\jobsrun > 2 \period^{2}$. 
  The proof is by induction. 
  Assume that it is true for $\jobsrun\geq 2 \period^{2}$ and consider $\countvar+1$ such that $\firsttime[\pl][\countvar+1]<+\infty$. 
  Since the  increment in the number of jobs is at most one, this implies that $\firsttime[\pl][\countvar]<+\infty$.
  
  By \cref{lem:pref0gen}, for all \(\ti > \firsttime\), the weight on $0$ satisfies
    \begin{align}
    \tststrat(0) & \geq \frac{\tststrat[\pl][\firsttime](0)}{\tststrat[\pl][\firsttime](0)+\left(1-\tststrat[\pl][\firsttime](0)\right) \exp\left(-\smness\countvar\countn\right)} \notag \\
    & = \frac{1}{1+\left(\frac{1}{\tststrat[\pl][\firsttime](0)}-1\right) \exp\left(-\smness\countvar\countn\right)} \notag \\
    & \geq \frac{1}{1+\left(1+B^\pl\exp(-\smness\countvar)-1\right) \exp\left(-\smness\countvar\countn\right)} \label{eq:hr1}\\
    & \geq \frac{1}{1+B^\pl\exp\left(-\smness\countvar(1+\countn)\right)} \notag \\
    & \geq \frac{1}{1+B^\pl\exp\left(-\smness(\countvar+1)\right)} \label{eq:hr2}
    \end{align}
    where \cref{eq:hr1} follows from the recurrence hypothesis and \cref{eq:hr2} is implied by $t > \firsttime$, so $\countn \geq 1$.
    
    The initial weight when reaching $\countvar+1$ for the first time is equal to the current weight for $\countvar$ packages, it follows that
    \begin{equation}
     \tststrat[\pl][\firsttime[\pl][\countvar+1]][\countvar+1](0)=     \tststrat[\pl][\firsttime[\pl][\countvar+1]][\countvar](0) \geq  \frac{1}{1+B^\pl\exp(-\smness(\countvar+1))}.
    \end{equation} 

This proves the result for $\jobsrun+1$. Hence, it concludes the induction and proves the lemma.
\end{proof}

\subsection{Reinforced Random Walks}

\cref{lem:pref0gen} shows that every time the process reaches a given level, there is a reinforcement on the probability to play the action profile where every player plays $0$. 
The next step is to understand how this reinforcement influences the system dynamic. 
In order to do so, we prove some results on reinforced random walks. 
We follow the presentation of
\citep[p.~47]{menshikovNonhomogeneousRandomWalks2017} of nearest neighbor one-dimension random walk. 
They study  random walks that are non-homogeneous \emph{in space} but homogeneous \emph{in time}. 
The difference is that we suppose there is a reinforcement factor in the drift, resulting in a random walk that is  non-homogeneous \emph{in time and space}, but
bounded. 
Furthermore, we suppose that there is more heterogeneity in the weight of our
random walk, in the sense that precise probabilities of going up or down are highly dependent of the past but nevertheless bounded.

The proof of \cref{thm:onealgolearning} requires the following lemma, whose proof can be found in \cref{se:proofs}.

\begin{lemma}[Reinforced Random Walk]\label{lem:reinfwalk3-main}
    Let \(\maxmove > 0\) and \((\walk, Z_\ti, \ti \geq 0)\) a sequence of random variables in
    $\N$ such that $\walk/d \leq Z_\ti \leq \walk$ with $d > 1$, \(|\walk[\nti] - \walk| \leq \maxmove\) and
    \(\sigwalk = \sigma(\walk[0], Z_0, \ldots, Z_{\ti}, \walk)\). Suppose that there exists
    a function $r: \N^2 \rightarrow {\mathbb R}^+$, reals $\levz$ and $\sumpr > 0$ such that:
    
    \begin{itemize}
    \item
      for all $\ti \ge 0$ and $Z_t \geq \levz$, \(\proba[\walk[\nti] > \walk][\sigwalk] \leq \princr[Z_t][\countn[Z_t]]\) almost surely,
      where \(\countn[z]\) is the number of occurrences of the \(Z_\ti=z\)
      event for \(\tib \leq \ti\),
    \item
      for \(z \geq \levz\),  \(\sum_{\occl} \princr[z] < \frac{\sumpr}{z}\)

    \end{itemize}
    
    Then \(\walk\) is almost surely bounded.
\end{lemma}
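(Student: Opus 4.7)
The plan is to prove the a.s.\ boundedness of $\walk$ by controlling the running maximum of the concave Lyapunov function $V(x)=\sqrt{x}$. Set $V_t^*:=\max_{s\le t}V(\walk[s])$. By concavity, $V(\walk[\nti])-V(\walk)\le(\walk[\nti]-\walk)/(2\sqrt{\walk})$ pointwise, and the bounded-increment property $|\walk[\nti]-\walk|\le\maxmove$ together with the observation that $(\walk[\nti]-\walk)\mathbf 1\{\walk[\nti]\le\walk\}\le 0$ yields $\mathbb E[\walk[\nti]-\walk\mid\sigwalk]\le\maxmove\,\mathbb P[\walk[\nti]>\walk\mid\sigwalk]$. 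Combining these with the hypothesis on $\{Z_t\ge\levz\}$ and with $\walk\ge Z_t$, one obtains
\[
\mathbb E\bigl[(V(\walk[\nti])-V(\walk))^+\bigm|\sigwalk\bigr]\,\mathbf 1\{Z_t\ge\levz\}\ \le\ \frac{\maxmove\,\princr[Z_t][\countn[Z_t]]}{2\sqrt{Z_t}}.
\]
Summing over $t$, reordering by the value $z=Z_t$, and invoking $\sum_m\princr[z]<\sumpr/z$ produces the convergent estimate $\mathbb E\bigl[\sum_t(V(\walk[\nti])-V(\walk))^+\mathbf 1\{Z_t\ge\levz\}\bigr]\le\tfrac{\maxmove\sumpr}{2}\sum_{z\ge\levz}z^{-3/2}<\infty$. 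This is the payoff of using the concave function $V$ instead of $x$: it turns the raw divergent drift budget $\sum_z\sumpr/z$ into the Riemann-summable $\sum_z z^{-3/2}$.

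The key structural observation is that once $V_t^*$ exceeds a buffer $\sqrt L$ with $L:=d\,\levz+\maxmove$, any further increase of $V_t^*$ necessarily occurs at a time with $Z_t\ge\levz$, so that the estimate above applies. Indeed, if $V_{t+1}^*>V_t^*\ge\sqrt L$, then $V(\walk[\nti])>\sqrt L$, hence $\walk[\nti]>L$; bounded increments then force $\walk>d\,\levz$, and the standing relation $\walk\le dZ_t$ gives $Z_t>\levz$. Moreover, before $V_t^*$ first reaches $\sqrt L$, one has $\walk[s]<L$ for all $s\le t$ and hence a single step reaches at most $L+\maxmove$, so $V_t^*$ is deterministically bounded by $\sqrt{L+\maxmove}$ until the buffer is crossed. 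These two facts combine into the pointwise inequality
\[
V_\infty^*\ \le\ \max\bigl(V(\walk[0]),\sqrt{L+\maxmove}\bigr)+\sum_{t\ge 0}(V(\walk[\nti])-V(\walk))^+\,\mathbf 1\{Z_t\ge\levz\},
\]
whose expectation is finite by the first paragraph. Thus $V_\infty^*<\infty$ almost surely, and strict monotonicity of $V$ gives $\sup_t\walk<\infty$ a.s.

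The hard part will be the buffer observation: it hinges on the precise interplay between $d$ and $\maxmove$ and on reducing the growth analysis to the high regime $\{Z_t\ge\levz\}$, where the hypothesis on $r$ has bite. A cleaner but strictly weaker route would be a conditional Borel--Cantelli argument applied level by level---since $\sum_m\princr[z]<\sumpr/z<\infty$, only finitely many up-moves of $\walk$ occur while $Z_t=z$ for each $z\ge\levz$---but this alone does not preclude $\walk$ from drifting to infinity across infinitely many levels, which is why the quantitative summable estimate $\sum_z z^{-3/2}$ is essential. Routine care is required at $\walk=0$, handled either by working with $V(\walk\vee 1)$ or by treating the finitely many initial moves away from $0$ explicitly.
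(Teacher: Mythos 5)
Your proof is correct, but it follows a genuinely different route from the paper's. The paper argues via a uniform escape estimate: conditionally on the walk reaching a band $[x-M,x]$, the probability that it never exceeds $x$ is at least $\prod_{z\in[(x-M)/d,\,x]}\prod_m (1-r(z,m))$, which is bounded below by a constant $e^{-B}>0$ independent of $x$ because the product ranges over only $O(x)$ levels each contributing a factor controlled by $\sum_m r(z,m)<A/z \lesssim d/(x-M)$; summing these uniformly positive probabilities over disjoint bands $[(k-1)M,kM]$ then forces $\mathbb P[\sup_t X_t=\infty]=0$. You instead run a Lyapunov argument on the running maximum of $V(x)=\sqrt{x}$, showing the total expected positive increment of $V$ in the relevant regime is at most $\tfrac{MA}{2}\sum_{z\ge z_0}z^{-3/2}<\infty$; your buffer step (with $L=dz_0+M$) correctly localizes all further growth of the running maximum to times with $Z_t\ge z_0$, and the a.s.\ conclusion follows since the correction term $\max(V(X_0),\sqrt{L+M})$ is a.s.\ finite even without integrability of $X_0$. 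The two approaches exploit the same two quantitative inputs ($X_t\le dZ_t$ to convert level information into walk information, and summability of $\sum_m r(z,m)$ in $m$ with the $A/z$ decay in $z$), but yours has a small additional payoff: the paper's proof implicitly requires $r(z,m)\le\rho<1$ uniformly (the $1-\rho$ appearing in its logarithm estimate), whereas your drift computation needs no such bound. Conversely, the paper's argument gives slightly more structural information (a uniform lower bound on the probability of stopping in each band). Your closing caveats (Borel--Cantelli alone being insufficient, and the degenerate case $X_t=0$) are both apt; note that on the event $\{Z_t\ge z_0\}$ one automatically has $X_t\ge Z_t\ge z_0\ge 1$, so the issue at $0$ never actually arises where the concavity bound is invoked.
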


\begin{proof}[Proof of \cref{thm:onealgolearning}]

  Let $\walk := \players \tnent$ and $Z_t := \max_{\plb \in \players}\left(\players\tent[\ti][\plb]+\plb\right)$ be random variables that are an encoding of $\tnent$, $\tent[\ti][\plb]$, $\plb$ where $\plb$ maximizes $\tent[\ti][\plb]$ and is maximal among the maximizers. Indeed, $\tnent = \walk/\players$, $\tent[\ti][\plb] = \left[Z_t/\players\right]$ and $\plb = Z_t \mod \players$.

  By definition,
  \begin{equation}
      Z_t = \max_{\plb\in \players}\left(\players\tent[\ti][\plb]+\plb\right) \leq \max_{\plb\in \players} \left(\players(\tnent-1)+\plb\right) \leq \players \tnent = \walk.
  \end{equation}
  Moreover,
  \begin{equation}
      Z_t \ge \max_{\plb \in \players} \players \tent[\ti][\plb] \ge \players \max_{\plb \in \players} \tent[\ti][\plb] \ge \players \frac{\tnent}{\players} = \frac{\walk}{\players}.
  \end{equation}

  Let $\levz = 2\players \period^2$. Suppose $Z_t \geq \levz$ and let $\plb$ such that $Z_t = \players\tent[\ti][\plb]+\plb$. In the following, we write $\ent[\plb]$ for $\tent[\ti][\plb]$ and $\firsttime[\plb][]$ for $\firsttime[\plb][\ent[\plb]]$. Then the probability that $\plb$ plays $0$ is $\tststrat[\plb][\ti][\ent[\plb]](0)$ which by \cref{lem:pref0gen} satisfies
  \begin{align}
    \tststrat[\plb][\ti][\ent[\plb]](0)  & \geq \frac{\tststrat[\plb][\firsttime[\plb][]][\ent[\plb]](0)}{\tststrat[\plb][\firsttime[\plb][]][\ent[\plb]](0)+\left(1-\tststrat[\plb][\firsttime[\plb][]][\ent[\plb]](0)\right) \exp\left(-\smness\ent[\plb] \countn[\ti][\ent[\plb]]\right)} \\
     & = \frac{1}{1+\left(\frac{1}{\tststrat[\plb][\firsttime[\plb][]][\ent[\plb]](0)}-1\right) \exp\left(-\smness\ent[\plb] \countn[\ti][\ent[\plb]]\right)}.
    \label{eq:pr1}
  \end{align}
  Moreover, by \cref{lem:initbound}, there exists $B>0$ such that
  \(
  \tststrat[\plb][\firsttime[\plb][]][\ent[\plb]](0) \geq \frac{1}{1+B\exp(-\smness \ent[\plb])},
  \)
  hence from \cref{eq:pr1}
  \begin{equation}
    \tststrat[\plb][\ti][\ent[\plb]](0) \geq \frac{1}{1+B\exp(-\smness\ent[\plb])\exp(-\smness \ent[\plb]\countn[\ti][\ent[\plb]])}.
    \label{eq:pr2}
  \end{equation}
  
  At each period, there are $\players$ new jobs. By \cref{hyp:rep}, there are less than $\period$ new jobs. Since $\plb$ has more than $\left[\levz/\players\right] = 2\period^2$ jobs, when $\plb$ plays $0$, we know that at least $\period$ jobs are not late. Therefore, the number of jobs at the next period has to be smaller or equal compared to the current period. Hence,
  \begin{equation}\proba[X_{\ti+1} > X_{\ti}][\sigwalk] \leq 1-\tststrat[\plb][\ti][\ent[\plb]](0) \leq \frac{B\exp(-\smness\ent[\plb])\exp(-\smness \ent[\plb]\countn[\ti][\ent[\plb]])}{1+B\exp(-\smness\ent[\plb])\exp(-\smness \ent[\plb]\countn[\ti][\ent[\plb]])}
  \label{eq:pr8}
  \end{equation}
  using \cref{eq:pr2}.

  This suggests the following definition,
  \begin{equation}
  r(Z_t, \countn[\ti][Z_t]) := B\exp(-\smness[Z_t/\players])\exp(-\smness [Z_t/\players]\countn[\ti][Z_t])
  \label{eq:pr4}
  \end{equation}
  which is equal to
  \[
  B\exp(-\smness\ent[\plb])\exp(-\smness \ent[\plb]\countn[\ti][Z_t]),
  \]
  because $\plb = Z_t \mod \players$, $\ent[\plb] = \left[Z_t/\players\right]$ and consequently, $\countn[\ti][Z_t]$ (the number of times $Z_t$ was equal to the current value) is lower than $\countn[\ti][\ent[\plb]]$ (the number of times that $\plb$ had the current number of jobs). Therefore, 
  \begin{equation}
  \label{eq:pr3}
    \exp(-\smness \ent[\plb]\countn[\ti][\ent[\plb]]) \leq \exp(-\smness \ent[\plb] \countn[\ti][Z_t]).
  \end{equation}
  
  Using previous equations,
  \begin{align}
    \proba[X_{t+1} > X_t][\sigwalk] & \leq B\exp(-\smness\ent[\plb])\exp(-\smness \ent[\plb]\countn[\ti][\ent[\plb]]) \label{eq:pr5} \\ 
    & \le B\exp(-\smness\ent[\plb])\exp(-\smness \ent[\plb]\countn[\ti][Z_t]) \label{eq:pr6} \\
    & \leq r(Z_t, \countn[\ti][Z_t]) \label{eq:pr7}
  \end{align}
  where \ref{eq:pr5} comes from \cref{eq:pr8}, \ref{eq:pr6} from \cref{eq:pr3} and \ref{eq:pr7} from \cref{eq:pr4}.
  
  For all $z \geq z_0$, the sum on $m$ of $r(z, m)$ is
  \[
    B\exp(-\smness[z/\players])\frac{1}{1-\exp(-\smness[z/\players])},
  \]
  so it is bounded by $A/z$ for some $A > 0$ and \cref{lem:reinfwalk3-main} applies, so we proved the theorem.
\end{proof}

\section{Conclusions}

We have studied a repeated strategic queueing model with spillover from one period to another. 
We have focused on the stability of the system when players play learning strategies. 
Several problems remain open in this model.

\paragraph{Multi-Level regret} 
We have used a \acl{MLEWA}.
Although \ac{MLEWA} is based on a no-regret algorithm, to prove that it is \emph{itself} a no-regret algorithm, we would need an appropriate definition of regret in the context of endogenously changing states. 
Proving the no-regret property of \ac{MLEWA} with a suitable definition of regret and using it to prove the system stability would be an interesting generalization.
Another promising research direction is the definition of other multi-level algorithms based on different no-regret algorithms.
In particular, it would be important to see which stability properties depend on the specific algorithm used and which other properties are general and hold for every no-regret algorithm. 

\paragraph{Model}
\label{pa:future-work}
Several extensions of the model are conceivable. 
For instance, a model with more than one server could be studied. 
In that case the strategy of each player would have two components: the chosen server and the chosen time at which jobs join the chosen server's queue.
An apparently simple, but non-trivial generalization would involve the consideration of lower penalty costs.

\paragraph{Importance of the value of $\tplty[\ent[]]$} 
Several results of our paper are based on the value of the penalty $\tplty[\ent[]]$. 
If it is large enough, then the system is stable. 
We conjecture that a large but constant penalty is not sufficient for the stability in the learning context; \ie the penalty must depend on the number of jobs in the system, otherwise the number of jobs could be unbounded with a positive probability. 
This contrasts with the myopic strategic case, where a constant penalty cost guarantees stability, if it is large enough.

\subsection*{Acknowledgments}

Marco Scarsini and Xavier Venel are members of GNAMPA-INdAM.
Part of this work was carried out when Lucas Baudin was visiting Luiss University. 
This research project received partial support from  the Italian MIUR PRIN 2017 Project ALGADIMAR ``Algorithms, Games, and Digital Markets.''
Lucas Baudin ackowleddes the financial support by COST action GAMENET CA16228. 
Xavier Venel acknowledges the financial support by the National Agency for Research, Project CIGNE (ANR-15-CE38-0007-01).

\bibliographystyle{apalike}
\bibliography{bibliography}

\appendix

\section{Symbols and acronyms}
\label{se:symbols}
\printglossary[title=Symbols]

\section*{Acronyms}
\begin{acronym}
\acro{FIFO}{first-in first-out}
\acro{LIFO}{last-in first-out}
\acro{CCE}{coarse correlated equilibrium}
\acroplural{CCE}[CCE]{coarse correlated equilibria}
\acro{DQM}{dynamic queueing model}
\acro{EWA}{exponential weight algorithm}
\acro{MLEWA}{multi-level EWA}
\acro{NE}{Nash equilibrium}
\end{acronym}

\section{Proof of the reinforced random walk lemma}
\label{se:proofs}

\NewDocumentCommand{\debugg}{m}{#1}
\NewDocumentCommand{\levb}{}{z}
\NewDocumentCommand{\di}{}{d}

\begin{proof}[Proof of \cref{lem:reinfwalk3-main}]

    \NewDocumentCommand{\tiz}{}{t_0}
    \NewDocumentCommand{\maxeq}{}{T}
    \NewDocumentCommand{\eqlev}{O{\ti}}{S^{}(#1)}
    \NewDocumentCommand{\reached}{O{\lev}}{A(#1)}
    \NewDocumentCommand{\intreached}{O{}}{\overline{R}^{#1}(\lev)}
    \NewDocumentCommand{\tireached}{}{T}
    \NewDocumentCommand{\prtime}{O{\ti}}{R_{#1}}
    \NewDocumentCommand{\probasup}{}{\proba[\sup_{\ti \in \N} \walk \in [\lev -\maxmove, \lev]][\reached] }
    \RenewDocumentCommand{\lev}{}{\debugg{x}}
    \RenewDocumentCommand{\levz}{}{\debugg{x_0}}

    The following proof is inspired by \citet{pemantleSurveyRandomProcesses2007}.
   
    We first prove that the probability that
    \(\sup_{\ti \in \N} \walk \in [\lev-\maxmove, \lev]\) for all $\lev$ such that $\lev \geq \levz + \maxmove$, conditionally on the fact that \([\lev-\maxmove, \lev]\) is reached, is lower bounded by something strictly greater than $0$ and that does not
    depend on \(\lev\). 
    As we will show, this implies that almost surely there exists  $\lev$ such that $\sup_{\ti \in \N} \walk = \lev$.
    
    Denote the event that $[\lev-\maxmove, \lev]$ is reached by $\walk$ by \(\reached\). 
    We now show that for all $\lev \geq \levz+\maxmove$:
    \begin{equation}\label{eq:app:supwalk}
        \probasup \geq \prod_{\levb \in \debugg{\left[\frac{\lev - \maxmove}{\di}, \lev\right]}} \prod_{m \in \N} (1-\princr[\levb])
    \end{equation}
To prove this result, we fix $\lev\in \N$ such that $\lev > d z_0+\maxmove$ and we introduce an extended random process. 
Define the new state space $\overline{\Omega}=\N^2 \times \N \times \N^{x} \times \{0,1\}$. 
The interpretation of the a state $(x,z,n_0,...,n_{x},i)$ is the following:
    \begin{itemize}
    \item the current state is $(x, z)$,
    \item the path of $Z_\ti$ has gone $n_r$ times through the state $r$,
    \item $i$ is equal to $1$ if and only if $\walk$ went up from a state between $x$ and $x-M$.
    \end{itemize}
    Formally, let $(X,Y,N_0,...,N_{x},I)_{t\geq 1}$ be the random process on $\overline{\Omega}$. The first coordinate is equal to $\walk$ whereas all other coordinates are deduced from it. By construction, we know that $\walk$ has a maximal increment of $M$, hence in order for the supremum to be strictly greater than $x$, it is necessary for a positive jump from a state between $x-M$ and $x$, hence
    \[
    \{\sup_{\ti \in \N} \walk>x\} \subset\{\exists t\geq 1, I_t=0\}.
    \]
    It follows that
    \[
    \probasup \geq \proba[\forall t\geq 1, I_t=0][\reached].
    \]
    Moreover, 
    \[
    \proba[\forall t\geq 1, I_t=0][\reached]\geq  \prod_{\levb \in [\frac{\lev - \maxmove}{\di}, \lev]} \prod_{m \in \N} (1-\princr[z]).
    \]
    Indeed, by construction of the auxiliary random process, we know that:
    \begin{itemize}
    \item for every $r\in \{0,...,M\}$, $N_r$ is only increasing,
    \item conditionally on $Z_t=z \geq \frac{x-M}{d}$, $N_z=n$ and the past, the probability for $i$ to stay equal to $0$ is at least $(1-\princr[z][n])$,
    \item conditionally on $z<\frac{x-M}{d}$,  the probability for $i$ to stay equal to $0$ is $1$.
    \end{itemize}
    It follows that 
    \begin{align}
        \probasup &\geq \proba[\forall t\geq 1, I_t=0][\reached]\\
        &\geq  \prod_{\levb \in [\debugg{\frac{\lev - \maxmove}{\di}}, \lev]} \prod_{m \in \N} (1-\princr[\levb]).
    \end{align}

    Then, the logarithm of the right hand side is
    \begin{equation*}
        \begin{aligned}
        \sum_{\levb \in [\debugg{\frac{\lev - \maxmove}{\di}}, \lev]}\sum_{m \in \N} \log(1-\princr[\levb]) & = \sum_{\levb \in [\debugg{\frac{\lev - \maxmove}{\di}}, \lev]}\sum_{m \in \N} - \log(1+\frac{\princr[\levb]}{1-\princr[\levb]}) \\
        & \geq \sum_{\levb \in [\debugg{\frac{\lev - \maxmove}{\di}}, \lev]}\sum_{m \in \N} - \frac{\princr[\levb]}{1-\princr[\levb]} \\
        & \geq \sum_{\levb \in [\debugg{\frac{\lev - \maxmove}{\di}}, \lev]}\sum_{m \in \N} - \frac{\princr[\levb]}{1-\maxpr} \\
        & = \sum_{\levb \in [\debugg{\frac{\lev - \maxmove}{\di}}, \lev]} - \frac{\sumpr}{\levb(1-\maxpr)} \\
        & \geq - \frac{\sumpr}{1-\maxpr}\sum_{\levb \in [\debugg{\frac{\lev - \maxmove}{\di}}, \lev]} \frac{\di}{\lev-\maxmove} \\
        & = - \frac{\sumpr}{1-\maxpr} \left(\lev- \frac{\lev-\maxmove}{\di}+1\right) \frac{\di}{\lev-\maxmove} \\
        & = -\frac{\sumpr}{1-\maxpr}\frac{(\di-1)\lev+\maxmove+\di}{\lev-\maxmove} \geq - B > 0,
        \end{aligned}
    \end{equation*}
    where $B>0$ is a positive constant which does not depend on $\lev$.
    
    It follows that
    \begin{equation}\label{pr:reinf1}
        \probasup \geq \exp\left(-\debugg{B}\right) > 0.
    \end{equation}
    
    The probability that the upper bound of $\walk$ belongs to $[\lev-\maxmove, \lev]$ is therefore lower bounded by a constant independent of $\lev$ conditionally on the fact that this interval is reached.
    
    \begin{align*}
        \proba[\sup_{\ti} \walk < \infty]& = \sum_{k \geq 1}\proba[\sup_{\ti} \walk \in [(k-1) \maxmove, k \maxmove]]\\ &\geq \sum_{k \geq [\frac{\levz}{M}]+1}\proba[\sup_{\ti}\walk \in[(k-1)\maxmove, k\maxmove]][\reached[k\maxmove]] \proba[\reached[k\maxmove]].
    \end{align*}
    However, if $\walk$ is unbounded, then $\reached[k\maxmove]$ happens, so $\proba[\reached[k\maxmove]] \geq \proba[\sup_\ti \walk = +\infty]$. Therefore, using (\ref{pr:reinf1}):
    \begin{equation*}
        \proba[\sup_{\ti} \walk < \infty] \geq \sum_{k \geq [\frac{\levz}{M}]}  \debugg{\exp\left(-B\right)} \proba[\sup_\ti \walk = +\infty].
    \end{equation*}
    The right hand side is equal to $\infty$ if $\proba[\sup_\ti \walk = +\infty]> 0$, so necessarily $\proba[\sup_\ti \walk = +\infty] = 0$. 
\end{proof}

\end{document}